\newtheorem{theorem}{Theorem}[subsection]
\newtheorem{corollary}[theorem]{Corollary}
\newtheorem{proposition}{Proposition}[section]
\newtheorem{remark}{Remark}[section]
\newenvironment{proof}[1][Proof]{\text{#1.} }{\ \rule{0.5em}{0.5em}}
\def\Q{{\mathbb Q}}        
\def\R{{\mathbb R}}        
\def\P{{\mathbb P}}        
\def\E{{\mathbb E}}        
\def\1{{\mathbf 1}}        
\numberwithin{theorem}{subsection}
\numberwithin{equation}{section}
\begin{document}
\title{Tracking VIX with VIX Futures:\\ Portfolio Construction and Performance\thanks{This draft is submitted for the Handbook of Applied Investment Research, edited by W. Ziemba et al.,  forthcoming, 2019.}}
\author{Tim Leung\thanks{Applied Mathematics Department,  University of Washington, Seattle WA 98195. Email: {timleung@uw.edu}. {Corresponding author}. } \and Brian Ward\thanks{Industrial Engineering \& Operations Research (IEOR) Department, Columbia University, New York, NY 10027. Email: {bmw2150@columbia.edu}.} }

\maketitle

\abstract{We study a series of static and dynamic portfolios of VIX futures and their effectiveness to track the VIX index.  We derive each portfolio using optimization methods, and evaluate its tracking performance from both empirical and theoretical perspectives. Among our  results, we show that static portfolios of different   VIX futures fail to track VIX closely.  VIX futures simply do not react quickly enough to movements in the spot VIX.  In  a discrete-time model, we design and implement  a  dynamic trading strategy that adjusts daily to optimally track VIX. The model is calibrated to historical  data and a simulation study is performed to understand the properties exhibited by the strategy. In addition, comparing to the volatility ETN, VXX, we find that  our dynamic strategy has a superior tracking performance.}\\

%
  
\newpage

\section{Introduction}
For all long-term investors it is very important to be able to effectively control the risk exposure of their portfolios through varying market conditions. Downside risk protection is particularly crucial. This motivates investors to consider different instruments and strategies to potentially hedge against market turbulence. One of the most widely recognized  measure of expected  market volatility is the CBOE Volatility Index (VIX). This is a desirable asset to hold as it has been observed empirically that volatility and market returns are anti-correlated, an effect known as \emph{asymmetric volatility}.\footnote{See \cite{blackAsymmetricVol} for the first use of this term.} Therefore, having exposure to VIX would protect the investor by providing some gains to offset market downturns. 

\begin{figure}[h]
	\begin{centering}
		\subfigure[April 1 -- Sep. 30, 2011]{\includegraphics[width=3in]{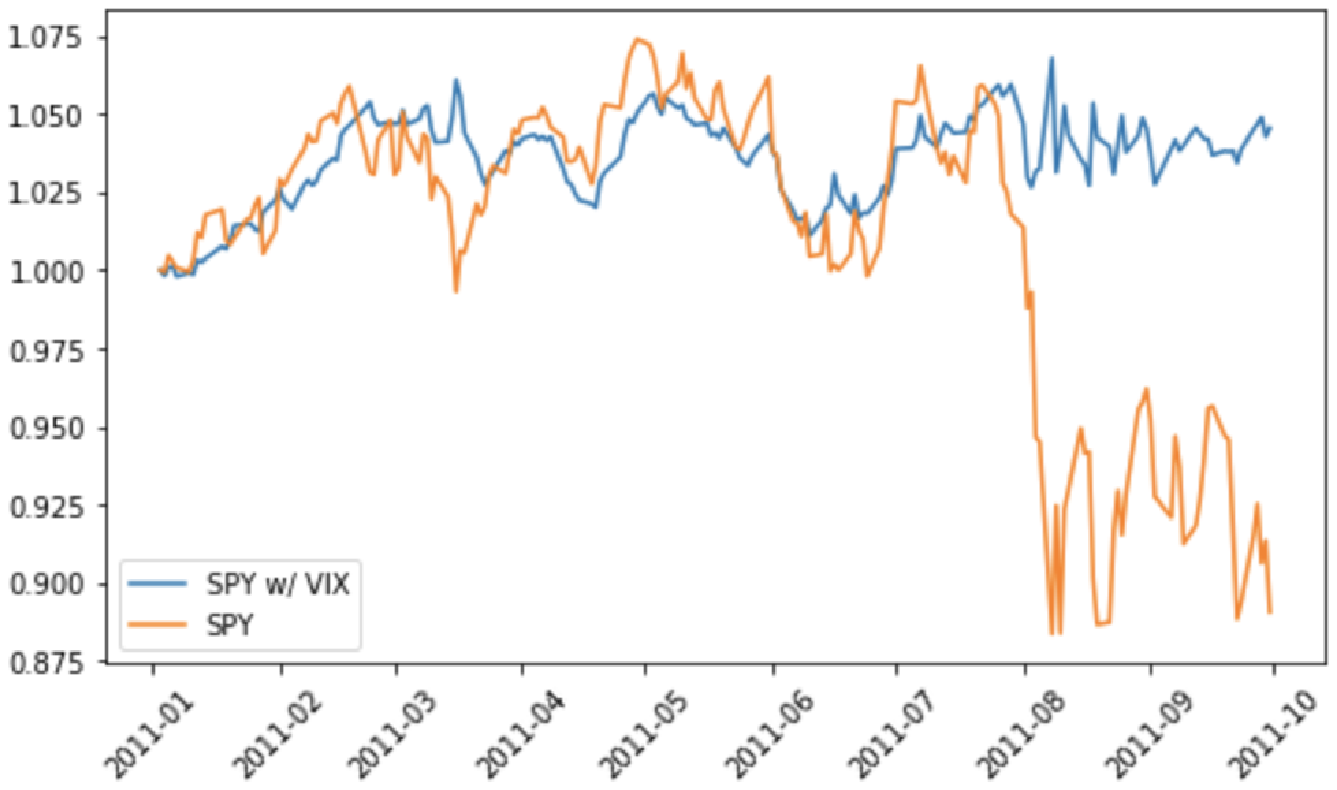}}
		\subfigure[Jan. 1 -- Dec. 31, 2014]{\includegraphics[width=3in]{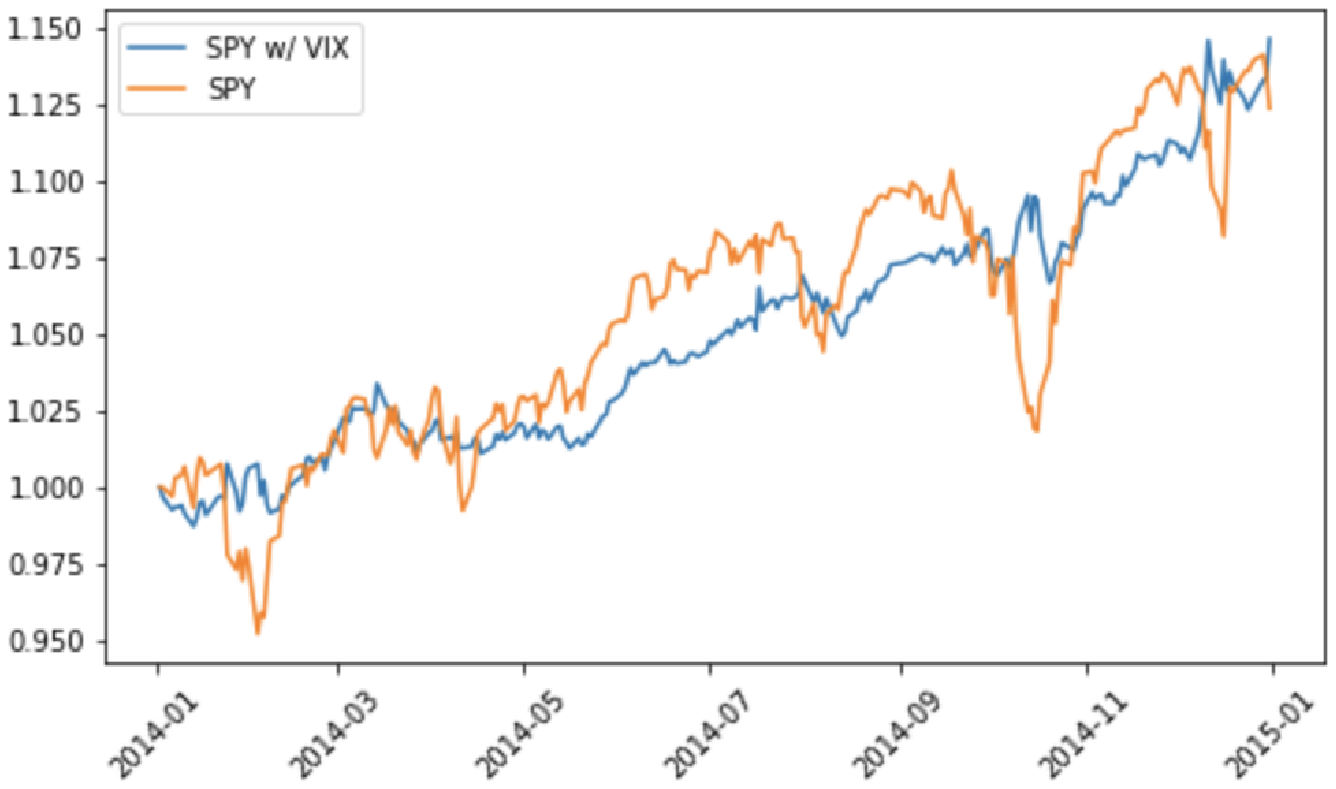}}
		\caption{\small{Historical portfolio values  during  (a) April 1, 2011 - Sep. 30, 2011, and (b) Jan. 1, 2014 - Dec. 31, 2014 for  a portfolio with 100\% investment in SPY, and another with 90\% wealth invested in SPY and  10\%  in VIX.}}
		\label{fig:VIX_SPY}
	\end{centering}
\end{figure}

To illustrate this, let us consider two portfolios: a portfolio fully invested in the SPDR S\&P 500 ETF (SPY), and another portfolio  with 90\% wealth in SPY and  10\%  wealth in VIX, assuming that VIX is tradable.\footnote{Also discussed in the authors' companion paper, \cite{wardIDX}.} Figure \ref{fig:VIX_SPY}(a) corresponds to the period of 2011 when the  U.S. suffered a credit rating downgrade by Standard and Poor's. News of a negative outlook by S\&P of the U.S. credit rating broke on April 18th, 2011.\footnote{See New York Times article: \url{http://www.nytimes.com/2011/04/19/business/19markets.html}.}  The SPY position would go on to lose about 10\%  with a volatile trajectory for a few months past the official downgrade on August 5th, 2011.\footnote{See  \url{http://www.nytimes.com/2011/08/06/business/us-debt-downgraded-by-sp.html}.} In contrast, a hypothetical portfolio with a 90/10 mix of SPY and VIX would be stable through the downgrade and end up with a positive return.  In Figure \ref{fig:VIX_SPY}(b),  we observe that  the same pair of portfolios earned roughly the same 15\% return in 2014. However, with SPY alone the portfolio is visibly more volatile  than the  portfolio with SPY and VIX. The reason is that  large drawdowns (for example on October 15th, 2014) were met by rises in VIX, creating a  stabilizing effect on the portfolio's value. 

Such a strategy is also beneficial for longer-term investment. If one invests 10\% of wealth in VIX and the remaining fraction in SPY on January 30, 2009, then the realized Sharpe ratio of holding this portfolio till December 27, 2017 is 1.16. By comparison, the SPY-only portfolio attains a Sharpe ratio of 1. The improvement is due mostly to a strong reduction in volatility from 16.07\% to 12.55\% and our calculations show further that investing up to  26\% of wealth in  VIX, with  remainder in SPY, would yield an even higher Sharpe ratio of 1.45. 

The benefits of having VIX exposure are obvious, but in reality VIX is not directly tradable. Instead, volatility exposure is achieved through the use of VIX futures or options, and a number of exchange-traded funds/notes (ETF/Ns). The most traded VIX-based ETN is the iPath S\&P 500 VIX Short-Term Futures ETN (ticker: VXX). However, as we will discuss, VXX fails to track VIX well and more generally, VIX ETF/Ns bring persistent negative returns. In the literature, a number of studies (\cite{DengMcCannWang,eraker2013,whaleyVolCost}) have also illustrated the negative returns associated with VIX futures and ETNs such as VXX. Nevertheless, VIX futures are also widely used for speculative trading and managed futures portfolios  (\cite{LeungLiLiZheng2015,Jiao2016,LeungYan2018,LeungYan2019}. As summarized by Bloomberg on January 29, 2019,  ``due to a structural quirk, the note lost 99 percent of its value over its life -- but it also democratized investor access to implied U.S. equity volatility. "

In this paper, we discuss the price dynamics of VIX futures and the underlying index, and construct static and dynamic portfolios of VIX futures for the purpose of tracking the index.  In addition to deriving and implementing the optimal tracking strategies, our main objective is to examine closely the effectiveness of the tracking portfolios.  For a thorough related study by the authors on discrete-time and continuous-time tracking of VIX and other indices involving futures/other derivatives, we refer to \cite{LeungWard2015,wardIDX,WardThesis}.

Our study  is structured as follows.  In Section \ref{sec:vix2}, we analyze the empirically observed return dependency of the VIX index and VIX futures. In Section \ref{sec:static_rep3}, we construct static portfolios of VIX futures for tracking VIX and investigate their tracking effectiveness over a long period of time. Our study illustrates a number of pitfalls in using static futures portfolios to attempt to track VIX, which motivates us to consider  a dynamic approach as discussed in Section \ref{sec:dt_model_vix}. Assuming a model that captures the mean-reverting dynamics of VIX, we derive the optimal dynamic replicating strategy that is adaptive to the daily fluctuations of VIX. The replicating strategy is implemented in Section \ref{sec:numerics}. Concluding remarks are provided in Section \ref{sect-conclude}.



\section{VIX Spot \&  Futures}\label{sec:vix2}
We begin by analyzing the price dynamics of the VIX Futures with respect to spot VIX. The historical price data for VIX futures  are obtained from Quandl.\footnote{Refer to \url{https://www.quandl.com/collections/futures/cboe} for documentation on the available CBOE data. One can search for specific contracts at \url{https://www.quandl.com/data/CBOE-Chicago-Board-Options-Exchange}.} We have validated the Quandl data against that directly from the CBOE.\footnote{CBOE publishes the historical data as well here: \url{http://cfe.cboe.com/data/historicaldata.aspx}.}  For the spot VIX data as well as the related ETNs, we use  Yahoo! Finance.\footnote{Quandl also has spot VIX data available, but it is sourced from Yahoo! Finance and we have validated the two datasets against each other.} 


After compiling, our dataset consists of the entire closing price history from March $26$, 2004 (the first day VIX futures began trading) through January $27$, 2017. However, we choose to analyze only the period from January $3$, 2011 (first trading date of 2011) through December $30$, 2016, which is a 6-year period from 2011 to 2016.  In our opinion, this amount of data is sufficient to avoid overfitting  and recent enough to understand the current dynamics among  VIX, futures, and VXX.  The entire sample period from 2011 to 2016 contains 1,510 total trading days of data. Moreover, on any given day of the sample set there are between 7 and 9 futures contracts available. In particular, there are 15 days with only 7 futures contracts available, 278 days with only 8 futures contracts available and 1,217 days with a full 9 months of contracts available. These contracts are  {always} consecutive months (starting with the 1-month contract) over this time period. In other words, when there are $N$ futures contracts available for trading, they always consist of the $N$ front months. As an example, if the current trading date were sometime early in January (before the January futures expiry), and 7 futures contracts were available for trading, then the maturities of the futures contracts would be the months of January through July, consecutively. These features of the dataset are consistent with CBOE protocol. They state they will currently list up to 9 near-term months for trading.\footnote{See \url{http://cfe.cboe.com/products/spec_vix.aspx}.} However, in our analysis, we eliminate the eighth and ninth month contract as it is not always the case that one can trade the eighth or ninth month contract. This allows us to use the full 1,510 days of data and avoid the need to eliminate the $15+278=293$ days with only 7 or 8 futures for trading. 

%

\subsection{Return Dependency}\label{sec:vix1dayreg}
We begin with a regression of the 1-day returns of VIX futures against the corresponding 1-day returns of VIX. The results are summarized in Table \ref{tab:VIX Regression Table}.  

\begin{table}[h]\centering\begin{small}
		\setlength{\extrarowheight}{2pt}
		\begin{tabular}{l r r c r} 
			\hline
			\text{Futures} & \text{Slope $\beta$} & \text{Intercept $\alpha$} & \text{$R^2$} & \text{$RMSE$} \\ \hline
			\hline
			1-month & 0.604 & $-$3.54$\cdot10^{-3}$ & 0.792 & 0.0242 \\
			2-month & 0.428 & $-$3.33$\cdot10^{-3}$ & 0.759 & 0.0189 \\
			3-month & 0.321 & $-$2.45$\cdot10^{-3}$ & 0.718 & 0.0158 \\
			4-month & 0.267 & $-$1.98$\cdot10^{-3}$ & 0.688 & 0.0141 \\
			5-month & 0.226 & $-$1.79$\cdot10^{-3}$ & 0.645 & 0.0131 \\
			6-month & 0.200 & $-$1.62$\cdot10^{-3}$ & 0.615 & 0.0124 \\
			7-month & 0.184 & $-$1.35$\cdot10^{-3}$ & 0.597 & 0.0118 \\
			\hline    \end{tabular}\end{small}
	\caption[Single Day Return Regressions for VIX Futures]{\small{A summary of the regression coefficients and measures of goodness of fit for regressing one-day returns of 1-month through 7-month futures on 1-day returns of spot VIX from Jan. 3, 2011 to Dec. $30$, 2016. The root mean squared error is defined by $RMSE:=\sqrt{\sum_{i=1}^n (r_i^{(V)}-\alpha-\beta r_i^{(j)} )^2/n},$ where $r_i^{(j)}$ is the 1-day return of the $j$th futures contract on trading day $i$, $r_i^{(V)}$ is the corresponding return of spot VIX, and $n$ is the number of trading days. The front seven month contracts are index by $j\in\{1, \ldots, 7\}$.}}\label{tab:VIX Regression Table}
\end{table}

In Table \ref{tab:VIX Regression Table}, we observe  the high $R^2$ values for all futures, suggesting that they are highly correlated with the spot. The  futures with shorter maturities have higher $R^2$ values. This can be explained by (i) the fact that futures prices tend to approach the spot price towards maturity, and (ii) that long dated contracts are less liquid than the short term contracts.  The slope coefficients are all statistically significant and less than $1$, which is intuitive as futures returns tend to  be less volatile than   spot returns. The negative intercepts, which are statistically significant, indicate that futures prices tend to fall even if the spot price does not move.  The reason lies in the term structure of VIX futures, which is as we will see is typically increasing in time-to-maturity. Even if the spot price does not move, the futures prices tend to decrease to match the spot price towards maturity, contributing to a negative intercept.

In Figure \ref{fig:correlationfigure2}, we plot the time series of spot VIX, the 1-month futures price (May-16 contract), and the  7-month futures price (Nov-16 contract) over the period from April $22$, 2016 (1 day after the expiration of the April-16 contract) to May $18$, 2016 (the expiration date of the May-16 contract). Notice that the spot, 1-month futures price and 7-month futures price all tend to move together. However, the moves in the spot are larger than moves in the futures prices. We also observe that the spot and futures prices often move in the same direction, but the futures prices do not move 1 for 1 with the spot. As one might expect, this effect is more pronounced for the 7-month contract, which barely moves over the period. For example, from trading day 4 to 5, we have a large up move in VIX of \$1.45 which is only met by an increase of \$1.00 by the 1-month contract and an increase of \$0.50 by the 7-month contract. 

\begin{figure}
	\centering
	\includegraphics[trim={0.5cm 1.4cm 0.5cm 1.4cm},clip,width=4.5in]{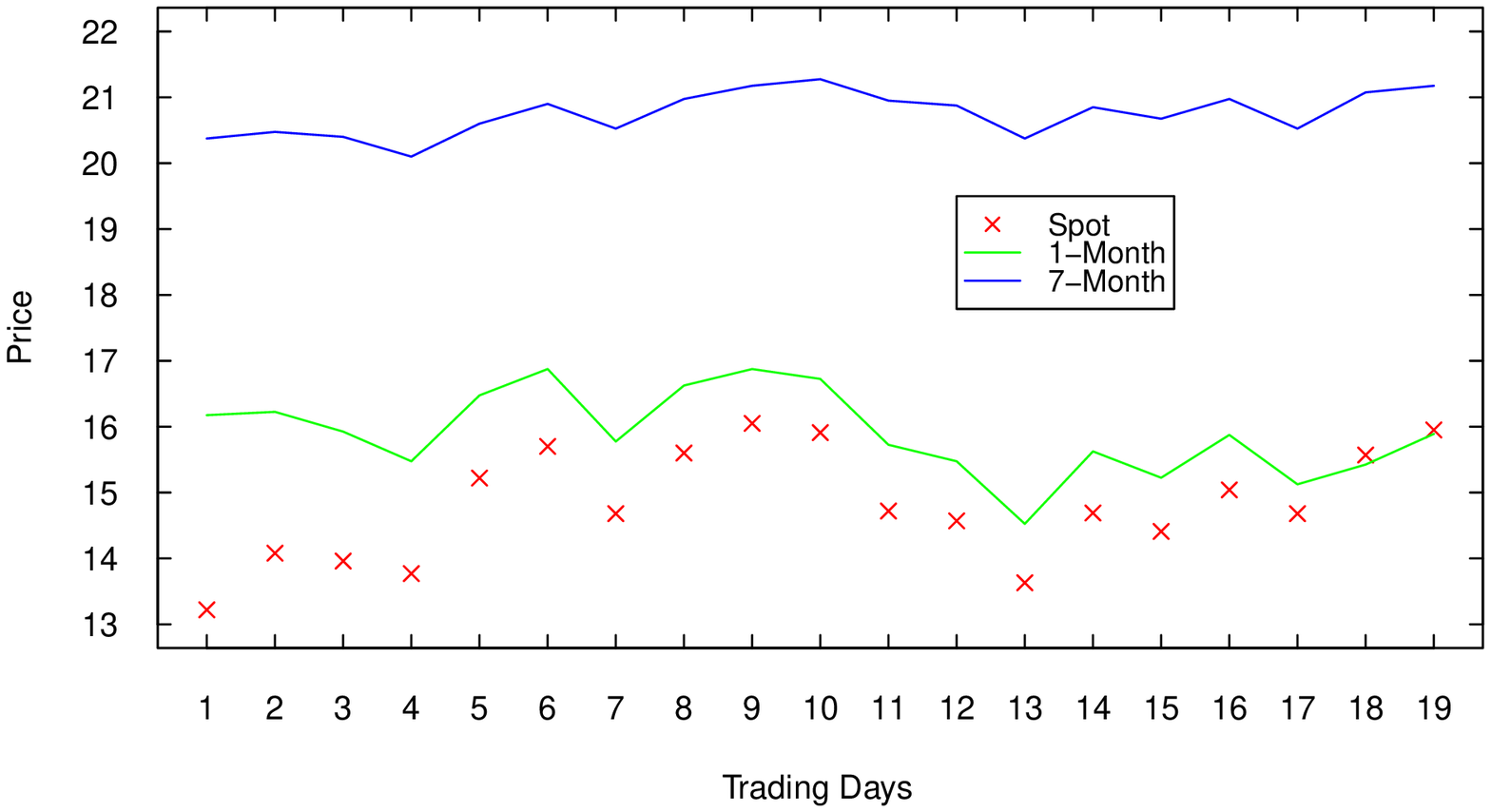}
	\caption [Price Evolution of VIX, 1-month Futures and 7-month Futures]{\small{Time series of VIX, 1-month futures price (May-16 contract) and 7-month futures price (Nov-16 contract) over the period from April $22$, 2016 to May $18$, 2016. The x-axis marks the trading day number, while the y-axis marks the price.}}\label{fig:correlationfigure2}
\end{figure}

Notice also that there is a slight discrepancy between the 1-month futures price at maturity  and   spot price. In theory, futures prices should converge to the spot price. However, market frictions, settlement rules\footnote{See \url{https://cfe.cboe.com/products/settlement_vix.aspx} for   VIX derivatives' settlement procedures.} and non-tradability of  the spot can substantially limit this convergence, as is seen in other futures markets, e.g. agricultural   futures (\cite{kevinNoConverge}). For the   non-convergence phenomenon of VIX futures, we refer to \cite{vixNoConverge} for an empirical study.

\begin{figure}
	\begin{centering}
		\subfigure[Jan.-June 2016]{\includegraphics[trim={0.5cm 1.4cm 0.5cm 1.4cm},clip,width=2.8in]{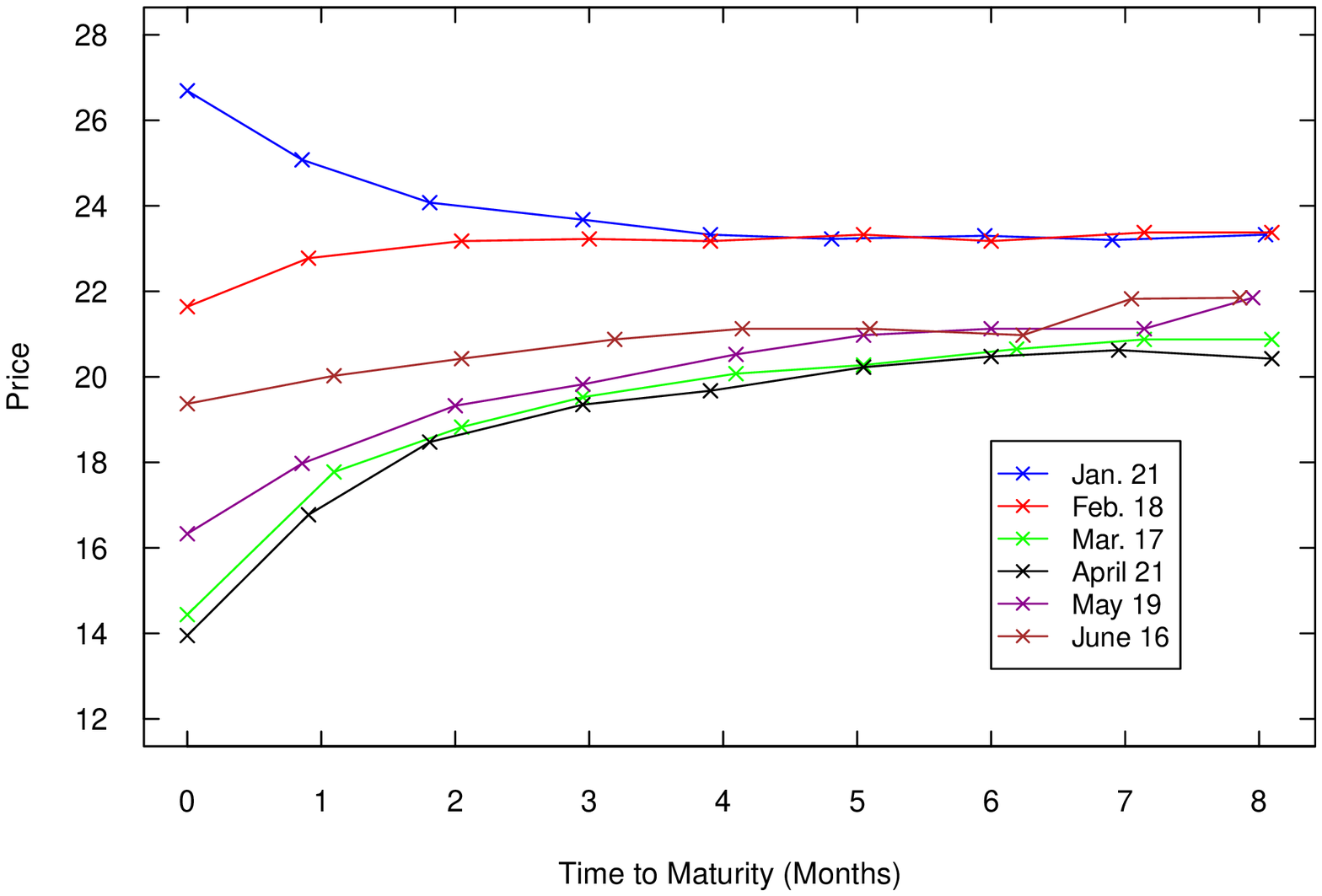}}
		\subfigure[Jan.-June 2009]{\includegraphics[trim={0.5cm 1.4cm 0.5cm 1.4cm},clip,width=2.8in]{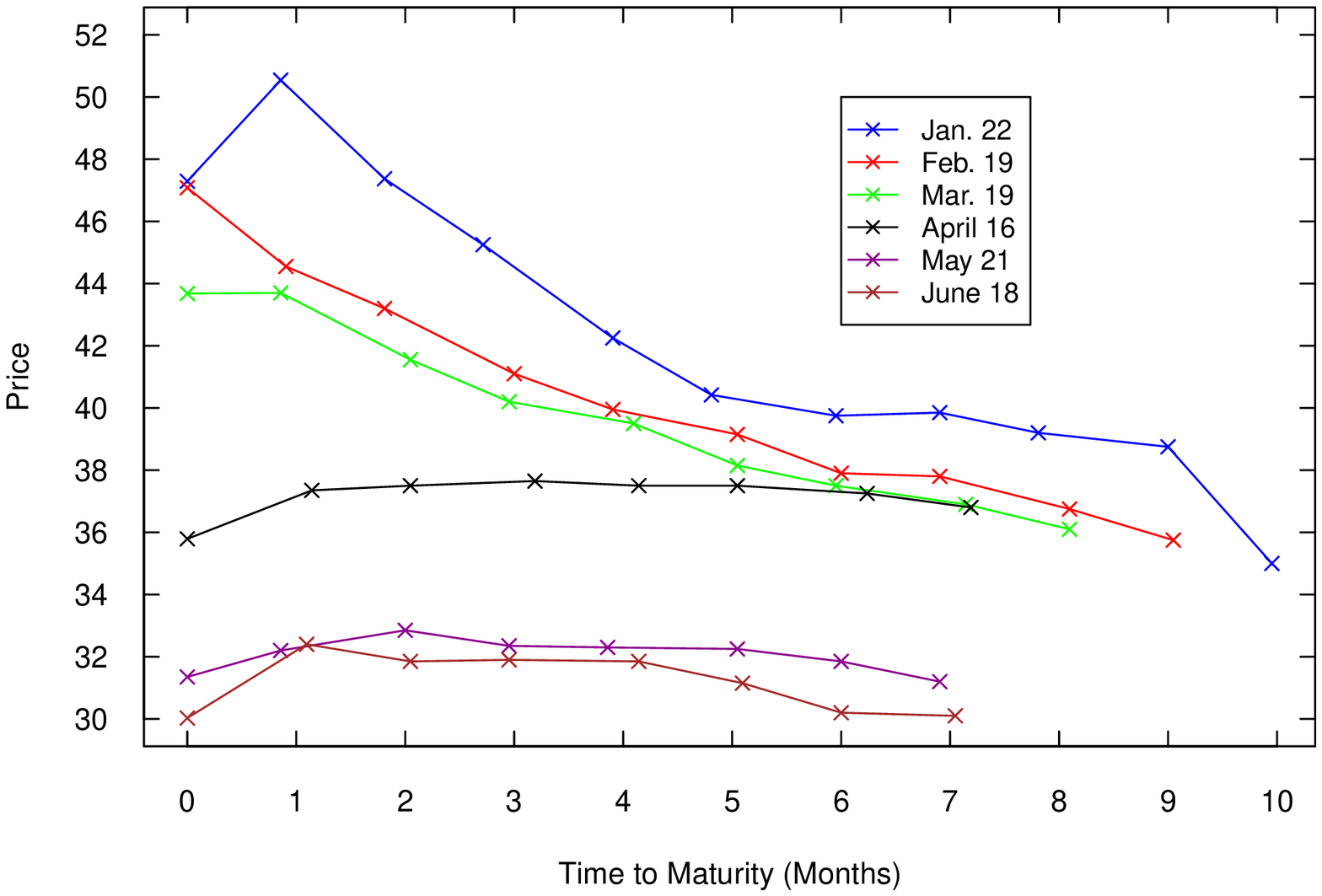}}
		\caption[Term Structures of VIX Futures]{\small{Term structures from Jan. to June in (a) 2016  and (b) 2009. The legend shows the dates upon which the term structures are constructed. Each date is the day after   that month's futures contract expires. The x-axis marks the time-to-maturity (in months) assuming each month is exactly 21 $(=252/12)$ days, while the y-axis marks the price.}}
		\label{fig:termstructure_vixchpt}
	\end{centering}
\end{figure}

  We plot the term structure of VIX futures as observed at several different time points throughout the first 6 months of 2016 (left) and 2009 (right) in Figure \ref{fig:termstructure_vixchpt}. The typical case for the VIX market is an increasing and concave futures curve (see left panel). However, in January 2016, VIX spiked as did expectations of future volatility.\footnote{One can find a discussion of trends in volatility in late January 2016 in the following article: \url{https://tickertape.tdameritrade.com/options/2016/01/volatility-high-early-2016-42727}.} Spikes like this can cause the term structure of VIX futures to invert, yielding a decreasing and convex futures curve. On the right panel of Figure \ref{fig:termstructure_vixchpt}, we plot the term structures in early 2009 where futures prices are at very high levels and the term structure shows different shape properties.  In the literature, there exist a number of models for VIX dynamics that yield the observed term structures (e.g. increasing concave, or decreasing convex). These include the Cox-Ingersoll-Ross (CIR)  model (see \cite{Grunbichler1996985} and \cite{futures_zhang}),  Ornstein-Uhlenbeck (OU) model and exponential OU model (see \cite{meanReversionBook}), and models with multiple stochastic factors and regimes (see \cite{MenciaSentana,Jiao2016}).
  


\subsection{Long-Term Dependency}\label{sec:vixlongterm}
When we conduct regressions of futures returns against spot returns over longer holding periods, new patterns emerge.  We use \emph{disjoint} intervals of various lengths when we compute the returns, meaning that for the longer horizons we have fewer data points, but even for the 30-day horizon we have approximately 50 data points. 

In Figure \ref{fig:VIX Regression Futures}, we plot the regressions of 1-month futures returns versus VIX returns for both 1-day returns (left) and 10-day returns (right), plotted on the same $x$-$y$ axis scale. The red x's mark pairs of returns, while the black line is the best fit line. One observes that the 10-day returns are much larger and consequently more volatile than the 1-day returns. However, for both holding periods, the futures returns are less volatile than the corresponding spot returns. To be precise, we find that the 1-day returns of the spot vary between -26.96\% and 50\%, while for the 10-day returns they vary between -39.76\% and 148.06\% (i.e. the longer holding period has more volatile returns). On the other hand for the 1-month futures, the 1-day returns vary between -20.81\% and 35.83\%, while the 10-day returns vary between -36.91\% and 88.89\% (i.e. futures returns are less volatile than the respective spot returns).  


\begin{figure}
	\centering
	\subfigure[1-Day Returns]{\includegraphics[trim={0.5cm 1.4cm 0.5cm 1.4cm},clip,width=2.8in]{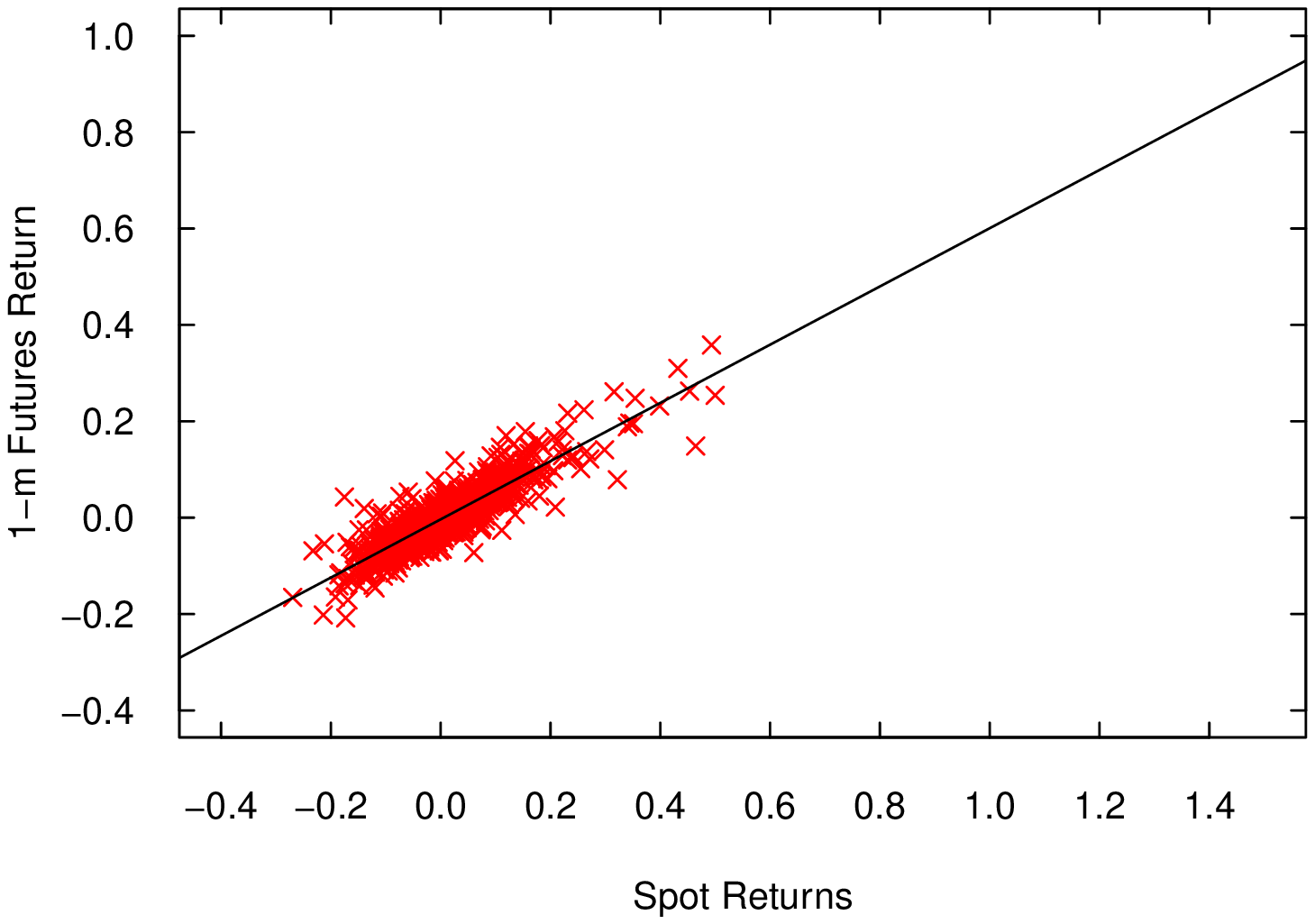}}
	\subfigure[10-Day Returns]{\includegraphics[trim={0.5cm 1.4cm 0.5cm 1.4cm},clip,width=2.8in]{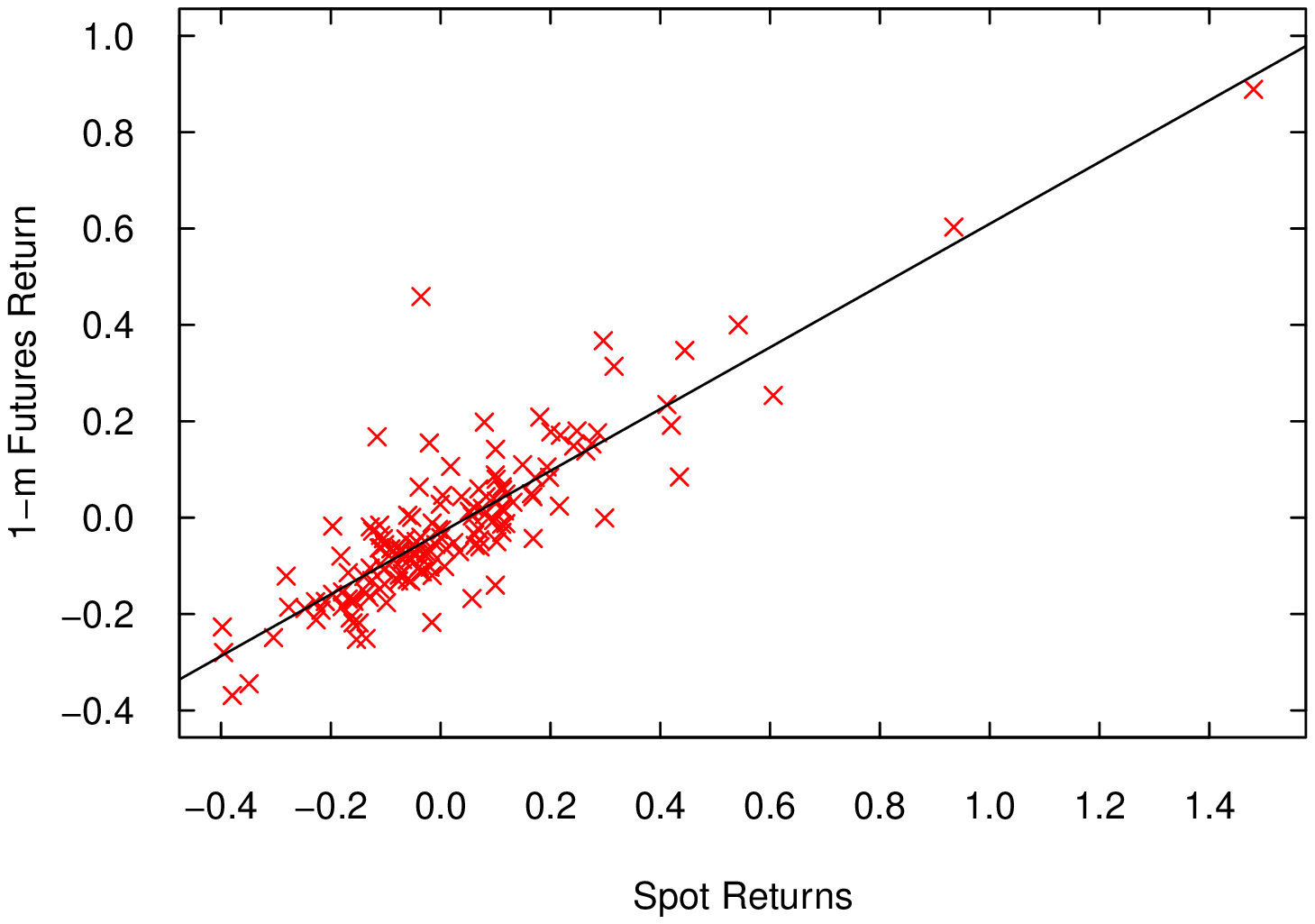}}
	\caption[Return Scatterplots for VIX Futures vs. VIX]{\small{Linear regressions of 1-month futures returns vs. spot VIX returns based on (a) 1-day returns and (b) 10-day returns. The x-axis marks the returns (in decimals) of spot VIX and the y-axis marks the returns (in decimals) of the 1-month futures.}}\label{fig:VIX Regression Futures}
\end{figure}

In the plot, the slope is slightly higher for the 10-day returns as compared to the 1-day returns. Moreover, for the 1-day returns the scatter plot is more tightly bound to the best fit line indicating a better fit than observed for the 10-day returns. However, this pattern does not hold in general. There is no discernible pattern in either the $R^2$ values or slope coefficients as the holding period is lengthend. This can be demonstrated by looking at a plot of $R^2$ or slope vs. holding period. We omit such a plot here as there is no clear pattern of increasing or decreasing predictive power (as measured by $R^2$) or leverage necessary to replicate the spot returns in the plot for any maturity contract. (Leverage can be measured by the reciprocal of the slope.) 


On the other hand, if we fix the holding period and increase the maturity of the contract, we see a decrease in predictive power and a decrease in slope. (Thus, an increase in leverage necessary to replicate the spot.) We demonstrate this for a number of different holding periods (1, 5, 10 and 15 days) in Table \ref{tab:holdingperiodslopes2}. In the top half, we give the slope coefficients for the regressions of futures returns against spot returns and in the bottom half, we display the $R^2$ values. By looking across the rows in either half, one notices a decrease in the reported statistic. This reaffirms our earlier observation that the spot is more closely tracked by short-term futures than long term futures. Indeed, over a 15 day period, the results indicate that about 1.6x leverage is required to track the spot with 1-month futures vs. 7.2x leverage for the 7-month futures. These implied leverage values are obtained by computing the reciprocals of the slopes (0.622, and 0.139, respectively). The leverage for the 7-month futures is quite substantial and is likely not be feasible in the marketplace due to trading costs or exchange limits.


\begin{table}[H]\centering \begin{small}
		\setlength{\extrarowheight}{2pt}
		\begin{tabular}{l c c c c c c c c}
			\hline
			&\text{Days} & \text{1-Mon} & \text{2-Mon} & \text{3-Mon} & \text{4-Mon} & \text{5-Mon} & \text{6-Mon} & \text{7-Mon} \\ \hline
			\hline   
			{Slope} 
			& 1 & 0.604 & 0.428 & 0.321 & 0.267 & 0.226 & 0.200 & 0.184 \\
			& 5 & 0.668 & 0.471 & 0.358 & 0.291 & 0.244 & 0.219 & 0.199 \\
			& 10 & 0.641 & 0.424 & 0.320 & 0.254 & 0.206 & 0.185 & 0.167 \\
			& 15 & 0.622 & 0.373 & 0.283 & 0.218 & 0.171 & 0.153 & 0.139 \\
			\hline \hline
			\text{$R^2$} 
			& 1 & 0.792 & 0.759 & 0.718 & 0.688 & 0.645 & 0.615 & 0.597 \\
			& 5 & 0.832 & 0.825 & 0.779 & 0.738 & 0.697 & 0.673 & 0.649 \\
			& 10 & 0.750 & 0.750 & 0.691 & 0.630 & 0.562 & 0.540 & 0.519 \\
			& 15 & 0.663 & 0.639 & 0.574 & 0.488 & 0.395 & 0.376 & 0.352 \\
			\hline    \end{tabular}
	\end{small}     
	\caption[Multiple Day Return Regressions for VIX Futures]{\small{A summary of the slopes and $R^2$s from the regressions of futures returns versus VIX returns over different holding periods.}}\label{tab:holdingperiodslopes2}
\end{table}

Although not immediately discernible, in Figure \ref{fig:VIX Regression Futures}, the intercept is more negative for the 10-day returns as compared to the 1-day returns. This confirms a property we have already discussed: futures tend to underperform and lose money relative to the spot returns. The more negative intercept indicates that this underperformance worsens over longer horizons. To see this more generally, in Figure \ref{fig:alphas}, we plot the intercepts for the regressions of returns of 1-month futures (black), 3-month futures (red) and 6-month futures (blue) across many different holding periods (from 1 day up to 30 days). It is quite clear that as the holding period is lengthened, the intercept becomes increasingly more negative. All intercepts (as well as the slopes from before) reported here are statistically significant at the 1\% significance level. Thus, a statistically significant discrepancy exists and continues to worsen as holding period is lengthened. 

\begin{figure}[H]
	\centering
	\includegraphics[trim={0.5cm 1.6cm 0.7cm 1.5cm},clip,width=3.7in]{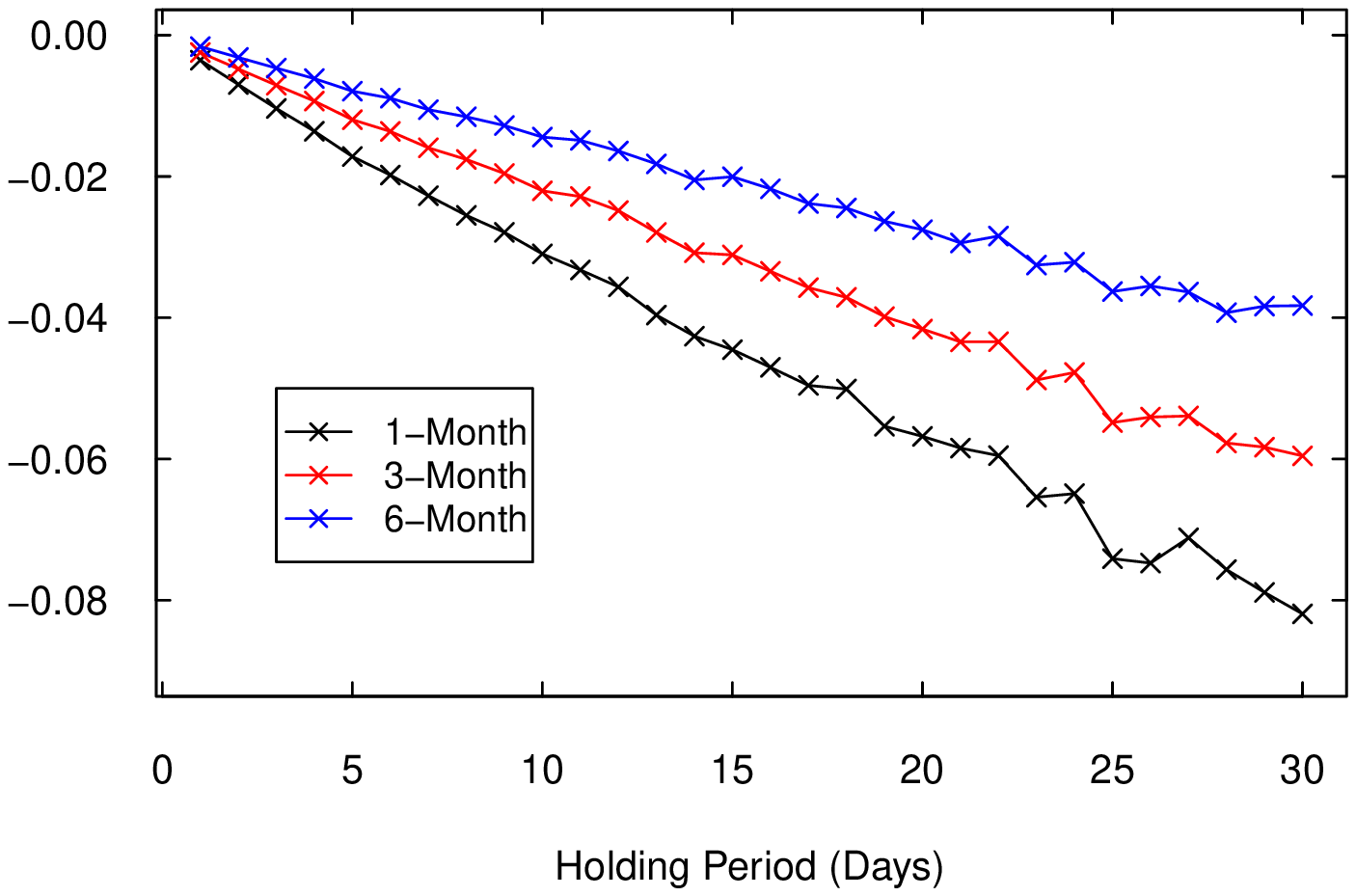}
	\caption[Regression Intercepts vs. Holding Period for VIX Futures]{\small{The intercepts from regressions of returns from a rolling futures position against spot returns, plotted against the holding period measured in trading days. From top to bottom: 6-month, 3-month, and 1-month futures.}}\label{fig:alphas}
\end{figure}

\section{Static Replication Strategies}\label{sec:static_rep3}
In this section, we investigate the static replication of VIX using  futures. Portfolios based on two different optimization criteria are considered. In Section \ref{sec:vixstaticrep}, we consider replication by optimizing the weights of the portfolio to match the physical dollar price of VIX over the training period. Due to the mean-reverting nature of VIX, the optimal portfolio is effectively all cash as this will keep the portfolio roughly near the mean level of VIX through the in-sample period. This motivates optimizing the portfolio to match the returns as closely as possible. The performance is improved upon with this change of criterion, but we demonstrate that it simply is difficult to track VIX with a static portfolio of futures. Unless otherwise noted,  the training or in-sample period is  the 5-year period from 2011 to 2015, and the test or out-of-sample period is the year 2016.  

\subsection{Physical Replication by Static Futures Portfolios}\label{sec:vixstaticrep}
In this section, we consider static positions including 1 or more futures contracts. By a static position in the futures contract, we mean a position that is constantly rolled over into the next futures contract. For example, a static investment of $\$P_0$ in 1-month futures would proceed as follows. First, purchase $P_0/F_0$ units of the current 1-month contract and hold it until maturity. The position's value at that time can be denoted by $P_1$. Then, one uses this final value to purchase what was previously (e.g. at time 0) the 2-month futures, but is now the 1-month futures. If this futures price is denoted $F_1$, then one purchases $P_1/F_1$ units of the new 1-month contract and hold \emph{that} contract until maturity. This process is continued indefinitely at each maturity date. 


There will be no cash withdrawals/injections at the roll periods (e.g. to keep the number of units of futures constant.) This makes the position completely self-financing, though the number of achievable units of futures will vary over time (i.e. it is completely possible that $P_0/F_0\neq P_1/F_1$). The static position described here, starting on January $3^\text{rd}$, 2011 with \$100 will be called the value of the 1-month futures contract. One can analogously define the 2-month, 3-month, etc. values as positions that always maintain all money in 2-month, 3-month, etc. contracts. 

The portfolio value on day $j$ is denoted by $P_j$ and consists of various VIX futures contracts. The value of spot VIX on day $j$ is denoted by $V_j$. The precise weightings amongst the futures will be chosen so as to minimize the sum of squared errors:
\begin{equation}\label{eq:vixSSE}
	SSE=\sum_{j=1}^n(V_j-G_j)^2,
\end{equation}
where $n$ is the number of trading days in the in-sample period. 

Let $k$ be the number of futures contracts being considered and $\textbf{w} := (w_0, \ldots, w_{k})$ be the real-valued vector of portfolio weights. In particular,  $w_{0}$ represents the weight given to the money market account.\footnote{The money market account yields overnight LIBOR. The data was obtained from the Federal Reserve Bank of St. Louis (FRED); see \url{https://fred.stlouisfed.org/series/USDONTD156N}.} Next, denote by $\textbf{C}\in\R^{n\times k+1}$ a matrix containing as columns the historical values of the money market account and the various futures contracts. Finally, denote by $\textbf{d}\in\R^{n}$ a vector containing the historical values  of VIX. All prices come from the in-sample set and are normalized to start at \$100 on day $j=1$. With this notation and optimization criterion \eqref{eq:vixSSE}, we derive the following constrained least squares optimization problem:
\begin{equation}
\begin{aligned}
& \underset{\textbf{w} \in \R^{k+1}}{\text{min}}
& & \| \textbf{C}\textbf{w}-\textbf{d}\|^2 \\
& \text{s.t.}
& & \sum_{j=0}^{k} w_j=1.
\end{aligned}
\end{equation}

The normalization implies that an investor starting with $\$100$ will invest \$$100\cdot w_j$ into the $j$th futures contract and \$ $100\cdot w_0$ into the money market account. For simplicity of the subsequent analysis, we consider portfolios containing any of the four of the following contracts: 1-month, 2-month, 6-month and 7-month. Allowing for any number of futures in the portfolio this naturally yields $2^4-1=15$ (subtract 1 because we do not look at the portfolio using no futures as that is trivial) different potential portfolios. Using all 7 available futures and their $2^7-1=127$ subsets would be untenable and obscure the analysis. The four contracts selected here represent the shortest and longest dated maturities that are available in the VIX futures market on every day of both the in-sample and out-of-sample period. 

The resuls of the optimization are given in Table \ref{tab:pfTable}. As above, $w_0$ always represents the weight on the money market account. All other $w_i$ are ordered so that if $i<j$, $w_i$ represents the weight on a nearer term contract than $w_j$. For example, in row 6 we are considering a portfolio of 1-month and 6-month futures. Thus, $w_1$ is the weight on 1-month futures, while $w_2$ represents the weight on 6-month futures. The Root Mean Squared Error (RMSE) has a standard definition in terms of the SSE defined in Equation \eqref{eq:vixSSE}. Precisely,
\begin{equation}\label{eq:rmseDEF}
	\begin{aligned}
		RMSE:=\sqrt{\frac{SSE}{n}},
	\end{aligned}
\end{equation} 
where $n=1,258$ for the in-sample period and $n=252$ for the out-of-sample period. Our choice of initial capital of \$100 allows us to interpret RMSE  as the average percentage deviation of the portfolio value from the spot price during the in-sample or out-of-sample period.


The RMSE values are quite large, indicating poor in-sample and out-of-sample tracking performance. To benchmark the performance we can compare the RMSE values to those achieved by VXX. The in-sample RMSE for VXX is 72.05\%, while the out-of-sample value is 21.97\%. The values indicate the futures based portfolios perform significantly better in-sample than does VXX. Out-of-sample, the futures based portfolios performs only slightly better in a few cases. 

\begin{remark}
The difference in performance in-sample vs. out-of-sample for VXX is driven mostly by the difference in length of the two data sets. Recall that VXX is an unlevered portfolio of VIX futures, which we have demonstrated in Section \ref{sec:vix2} does not react quickly to spot VIX price movements. Thus, once VXX has fallen away from VIX, it cannot return to the dollar value of VIX and additions to SSE continue to accumulate.
\end{remark}

Though the performance is somewhat better than VXX, the resulting strategies break with intuition. All portfolios are barely net long in the various futures contracts employed. This is seen prominently in the those with one or two futures only. For example, the 1-month only portfolio invests only 15.2\% of wealth in the value of the 1-month futures contract and the remaining 84.8\% is held in cash. This is due to the choice of SSE as an optimization criterion and that we measure SSE on the physical dollar price. This combination makes the optimization somewhat forward looking. It considers all at once the values VIX will take over the in-sample period. In particular, VIX is highly mean-reverting and thus, holding the portfolio value at the average level of VIX over the period is the best choice. Futures simply do not track the spot well so it is best to ignore any spikes and remain constant over time. 

Finally, in Figure \ref{fig:opt_pf} we plot the time series of spot VIX, the ETF VXX, and the optimal (across all combinations of portfolio components) out-of-sample portfolio. As discussed above, the optimal portfolio is relatively flat holding most of its value (91.7\%) and is barely net long in 1-month and 6-month futures (8.3\%). On the other hand VXX is more reactive to the spot price movements, but quickly diverges from spot VIX. Thereafter it is unable to recover and continues to fall away from spot VIX. By the end of the in-sample period, VXX has fallen to \$3.44, and for the out-of-sample period, it has fallen to \$29.89. In neither does samples does the optimal portfolio diverge in this manner. It instead stays roughly constant at VIX's average level over both periods. Neither portfolio appears a perfect surrogate for trading VIX and therefore we consider alternative methods to constructing portfolios of futures to track VIX in the next section.\\

\begin{table}[H]\centering \begin{small}
		\setlength{\extrarowheight}{2pt}
		\begin{tabular}{l r r r r r c c}
			\hline  
			\text{Futures} &$w_0$    &  $w_1$     &   $w_2$     &$w_3$    &$w_4$ 	& \text{in-RMSE} & \text{out-RMSE}\\
			\hline
			\hline
			{1-m}&0.848&  0.152 &          -  &       - &       - & 30.724 &23.345 \\
			{2-m}&0.857&  0.143 &          -  &       - &       - & 31.153 &26.074\\
			{6-m}&0.811&  0.189 &          -  &       - &       - & 31.026 &28.840\\
			{7-m}&0.777&  0.223 &          -  &       - &       - & 30.835 &28.616\\
			\hline
			{1-m,\,2-m}&0.932&  2.114 & $-$2.046 &       - &       - & 27.741 &19.521\\
			{1-m,\,6-m}&0.917&  0.345 & $-$0.262 &       - &       - & 30.576 &17.197\\
			{1-m,\,7-m}&0.840&  0.137 &  0.022 &       - &       - & 30.722 &23.819\\
			{2-m,\,6-m}&0.757& $-$0.203 &  0.446 &       - &       - & 30.960 &32.971\\
			{2-m,\,7-m}&0.669& $-$0.252 &  0.582 &       - &       - & 30.627 &33.518\\
			{6-m,\,7-m}&0.492& $-$2.146 &  2.654 &       - &       - & 29.801 &27.021\\
			\hline
			{1-m,\,2-m,\,6-m}&0.453&  3.770 & $-$5.615 & 2.392&       - & 23.404 &22.174\\
			{1-m,\,2-m,\,7-m}&0.341&  3.745 & $-$5.095 & 2.009&       - & 22.414 &24.742\\
			{1-m,\,6-m,\,7-m}&0.229&  1.979 &$-$11.066 & 9.858&       - & 22.960 &61.713\\
			{2-m,\,6-m,\,7-m}&0.359&  2.449 &$-$11.032 & 9.224&       - & 27.437 &37.218\\
			\hline
			{1-m,\! 2-m,\! 6-m,\! 7-m}&0.225&  3.292 & $-$3.151 &$-$5.548 &6.183 & 21.492 &44.222\\
			\hline    \end{tabular}
	\end{small} 
	\caption[Optimal VIX Tracking Futures Portfolios]{\small{Optimal portfolio weights/performance measures for portfolios of VIX futures for tracking the dollar value of VIX. Porfolios utilize any subset of the 1-month, 2-month, 6-month and 7-month contract.}}\label{tab:pfTable}
\end{table}

\begin{figure}[h]
	\begin{centering}
		\subfigure[In-Sample]{\includegraphics[trim={0.8cm 1.6cm 0.9cm 1.6cm},clip,width=4.5in]{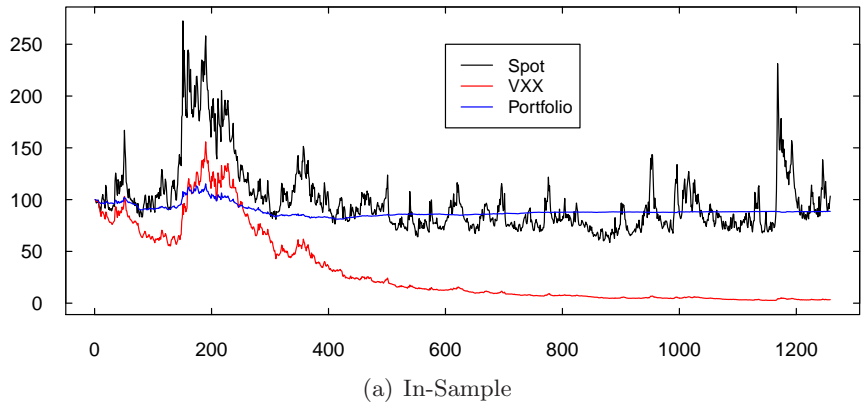}}
		\caption{\small{Time series of spot VIX, the ETF VXX and the optimal VIX dollar value tracking portfolio. The x-axis marks the trading day number, while the y-axis marks the price. All prices are normalized to start at \$$100$.}}
		\label{fig:opt_pf}
	\end{centering}
\end{figure} 

\subsection{Returns Replication with Static Portfolios}\label{sec:vixstaticretrep}

The results of Section \ref{sec:vixstaticrep} motivate considering a different tracking criterion. In particular, the tracking portfolios were all roughly constant and were barely net long VIX futures. Moreover, VXX had substantial discrepancies that were unrecoverable over a long trading period but it was at least somewhat reactive to VIX price movements. Our calculations show that VXX does well in tracking the returns of VIX relative to our portfolios. If we instead compute the RMSE value on the returns (rather than the physical price), we find in-sample VXX has a return RMSE of 4.63\% as compared to 7.51\% for our best (the 1-month, 6-month pair from before) portfolio. Those values are 4.55\%, and 7.21\%, respectively for the out-of-sample period. Since VXX tracks the returns well, it suggests building portfolios of VIX futures that also track the returns.


In the previous section our target for the tracking portfolios was the dollar value of VIX. If that were achievable at all points in time, the portfolios necessarily would have matched the VIX returns on each day as well. The converse need not be true: it is possible for the returns to be close to VIX returns while the physical values still diverge over time. This is clearly evidenced in the above paragraph whence VXX more closely matches the returns than our portfolio does, but more poorly tracks the physical price of VIX than the tracking portfolio does. In spite of this, we will consider return matching as the criterion for building portfolios in this section. 

The optimization problem remains mathematically the same as stated in Section \ref{sec:vixstaticrep}:
\begin{equation}
	\begin{aligned}
		& \underset{\textbf{w} \in \R^{k+1}}{\text{min}}
		& & \| \textbf{B}\textbf{w}-\textbf{y}\|^2 \\
		& \text{s.t.}
		& & \sum_{j=0}^{k} w_j=1.
	\end{aligned}
\end{equation}
The matrix $\textbf{B}$ contains as columns, the returns of the money market account and the various futures contracts under consideration. The vector, $\textbf{y}$ contains the historical returns of spot VIX. Final in-sample and out-of-sample performance is based on RMSE as defined in \eqref{eq:rmseDEF}, but this is done on the returns of the portfolio vs. the spot. 


The results of the optimization are reported in Table \ref{tab:pfTableRET}. By considering this new criterion, we obtain portfolios that are net long and are leveraged. The regression analysis of Section \ref{sec:vix2} suggested this would be the case. In fact, for all 15 portfolios, the weight on the money market account, $w_0$, is negative, indicating that borrowing is required. For example, the 6-month/7-month portfolio requires a leverage of about 3 to track the VIX returns (it borrows about 2x its value from the bank). This is also seen in the 6-month only portfolio and the 7-month only portfolios. 

For this tracking criterion, almost all portfolios outperform VXX in-sample (RMSE = 4.63\%) as well as out-of-sample (RMSE = 4.55\%). The exceptions are the 6-month only, 7-month only and 6-month/7-month portfolios. These two observations (extreme leverage and poor tracking performance) echo our previous discussion regarding the fact that the longer dated contracts are not as reactive to spot movements as the shorter dated contracts.

\begin{table}[h]\centering \begin{small}
		\setlength{\extrarowheight}{2pt}
		\begin{tabular}{l r r r r r c c}
			\hline  
			\text{Futures} &$w_0$    &  $w_1$     &   $w_2$     &$w_3$    &$w_4$ 	& \text{in-RMSE} & \text{out-RMSE}\\
			\hline
			\hline
			{1-m} & $-$0.317 & 1.317 & - & - & - & 3.620 & 3.491\\
			{2-m} & $-$0.751 & 1.751 & - & - & - & 3.929 & 3.685\\
			{6-m} & $-$2.018 & 3.018 & - & - & - & 4.926 & 4.659\\
			{7-m} & $-$2.178 & 3.178 & - & - & - & 5.020 & 4.832\\
			\hline
			{1-m, 2-m} & $-$0.499 & 0.921 &  0.578 & - & - & 3.532 & 3.430\\
			{1-m, 6-m} & $-$0.542 & 1.210 &  0.333 & - & - & 3.603 & 3.492\\
			{1-m, 7-m} & $-$0.503 & 1.237 &  0.265 & - & - &  3.610 & 3.491\\
			{2-m, 6-m} & $-$0.445 & 2.033 & $-$0.588 & - & - &  3.903 & 3.607\\
			{2-m, 7-m} & $-$0.380 & 2.050 & $-$0.669 & - & - &  3.896 & 3.598\\
			{6-m, 7-m} & $-$1.989 & 3.391 & $-$0.401 & - & - &  4.925 & 4.645\\
			\hline
			{1-m, 2-m, 6-m} & $-$0.319 & 0.905 & 0.769 & $-$0.355 & - & 3.522 & 3.400\\
			{1-m, 2-m, 7-m} & $-$0.244 & 0.904 &  0.810 & $-$0.470 & - & 3.515 & 3.396\\
			{1-m, 6-m, 7-m} & $-$0.463 & 1.215 &  1.269 & $-$1.021 & - & 3.592 & 3.494\\
			{2-m, 6-m, 7-m} & $-$0.379 & 2.040 &  0.193 & $-$0.854 & - & 3.896 & 3.600\\
			\hline
			{1-m,\! 2-m,\! 6-m,\! 7-m} & $-$0.237 & 0.910 & 0.770 & 0.594 & $-$1.036 & 3.511 & 3.405\\
			\hline    \end{tabular}
	\end{small}  
	\caption[Optimal VIX Return Tracking Futures Portfolios]{\small{Optimal portfolio weights/performance measures for portfolios of VIX futures for tracking VIX returns. Porfolios utilize any subset of the 1-month, 2-month, 6-month and 7-month contract. The reported RMSE values are for the returns of the portfolio.}}\label{tab:pfTableRET}
\end{table}

In terms of physical price replication, both VXX and the optimized portfolio continue to underperform, but that is not the goal of this section. Rather, we are concerned with return replication. Indeed the optimized portfolio is more reactive to spot movements and is no longer constant over the time period and the RMSE calculations demonstrate that the portfolio is tracking VIX returns well. However, with the leveraging, the portfolio has eroded so far in value that both in-sample and out-of-sample it has gone negative! The leveraging in combination with the fact that VIX futures tend to lose money even when the spot is unchanged causes the substantial loss in value. Interestingly, this portfolio is not overly leveraged (only about 1.24x).

Negative portfolio values are uncommon, but can occur when there is leveraging. As a simple example, suppose an investor begins with \$$100$ and constructs a 3x leveraged investment in an asset currently priced at \$$30$. Therefore, she borrows \$$200$ from the bank, at e.g. a constant (and annually compounded) interest rate of 5\% and purchases 10 units of the asset. Suppose that after 1 year's time, the asset's value has plummeted to \$$20$. The investor's 10 units are worth a total of \$$200$, but she owes $200\cdot1.05=210$ to the bank. The net value of her portfolio is therefore, $-10$, meaning a cash injection of \$$10$ is necessary just to liquidate the portfolio. A similar situation (though not as extreme) has occurred for the tracking portfolios here. The combination of leveraging along with the poor returns of VIX futures drove the portfolio to a negative value.

From this section and the previous one, it is quite evident that dynamic strategies are needed if one is to track the VIX with a portfolio of futures. Directly optimizing for the price trajectory yielded portfolios that did not react to VIX movements, and directly optimizing for return trajectory yield portfolios with negative values due to overleveraging. In the next section, we see a model-driven and dynamic strategy which appears to combat both pitfalls. 

\begin{figure}[h]
	\begin{centering}
		\subfigure[In-Sample]{\includegraphics[trim={0.8cm 1.6cm 0.9cm 1.6cm},clip,width=4.5in]{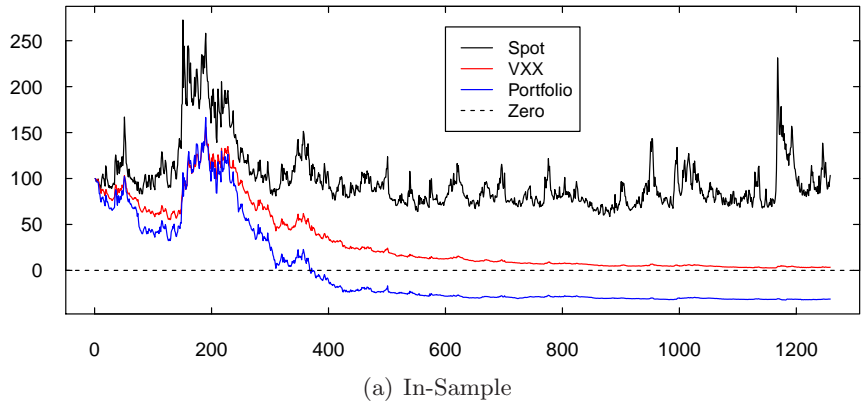}}
		\caption{\small{Time series of spot VIX, the ETF VXX and the optimal VIX return portfolio. The x-axis marks the trading day number, while the y-axis marks the price. All prices are normalized to start at \$$100$.}}
		\label{fig:opt_pf_ret}
	\end{centering}
\end{figure}

\section{Optimal Dynamic Replication}\label{sec:dt_model_vix}

In this section we develop an optimal strategy for replicating the returns of an index. The model is built on discrete-time dynamics for the index, futures on the index and the tracking portfolio. By working in discrete time, we will be able to capture the daily mark-to-market feature that is present in futures markets. We will then find the optimal dynamic weights to allocate capital amongst two futures contracts so as to replicate the daily returns of the index. 

\subsection{Futures Portfolio \& Price Dynamics}\label{Futures_Portfolio}

To maintain a futures position, the investor places a certain fraction of the exposure, e.g. 20\% of the dollar value, on margin with the clearing agency. This amount is called the \emph{initial margin}, while the account held with the clearing agency is called the \emph{margin account}. The margin account is \emph{marked-to-market} which means that if we denote the futures price for maturity $T_i$, as observed on day $j$ by $f_j^{(i)}$, the margin account receives a cash flow equal to $f_j^{(i)}-f_{j-1}^{(i)}$ for each unit of futures the investor is long at the end of day $j$. 

The investor is assumed to have no position limits when trading futures. Therefore, it is  possible for any portfolio weight  (fraction of the portfolio wealth) to be of any positive or negative value. The magnitude of the portfolio weight indicates the degree of leverage in the exposure.  Moreover, if these cash flows accumulate to a substantially negative value, the margin account will fall below a level called the \emph{maintenance margin}. In that case, the investor receives a \emph{margin call} and must replenish the account's funds back up to the initial margin. The account will otherwise be paid the risk free rate in exchange for the collateral. In our discrete-time model, we assume for simplicity that the full amount of the exposure is held on margin. This simplifies matters as there is no need to incorporate margin calls or maintenance margins in the model. 


Each time step will be one day, which we denote by $\Delta t$. Here, $\Delta t= {1}/{252}$ as there are $n=252$ trading days per year. Thus, the futures contracts are marked-to-market in accordance with the time step. Consequently, the investor must make all decisions for the next day's holdings based on information generated by the previous day's prices, and intraday positions are held  constant. The value of the investor's holdings is denoted by the discrete-time  process, $\left\{X_j\right\}_{j=0}^\infty$, where $X_0$ is the initial wealth. 

Further suppose there are $N$ futures contracts available for trading with maturities $T_1<...<T_N$. Then a portfolio holding a fraction of wealth, $w^{(i)}_j$, in the $i$th maturity, ($i=1,...,N$) will have $\frac{w^{(i)}_jX_{j}}{f_{j}^{(i)}}$ many units of futures contract of maturity $T_i$ on day $j$. Let the continuously compounded risk free rate be $r$ so that 1 dollar grows to $e^{r\Delta t}$ dollars the next day. From the above, it follows that if the portfolio of futures contracts is worth $X_{j}$ on day $j$, then it is worth
\begin{equation}\label{discrete_portfolio}
	\begin{aligned}
		X_{j+1}=X_{j}e^{r\Delta t}+\sum_{i=1}^{N} \frac{w^{(i)}_jX_{j}}{f_{j}^{(i)}}\left[f_{j+1}^{(i)}-f_{j}^{(i)}\right], \qquad \forall \,j\ge0\\
	\end{aligned}
\end{equation}
on day $j+1$. It follows that the daily return of the portfolio is given by
\begin{equation}\label{discrete_portfolio_ret}
	\begin{aligned}
		\frac{X_{j+1}}{X_j}-1=e^{r\Delta t}-1+\sum_{i=1}^{N} w^{(i)}_j\left[\frac{f_{j+1}^{(i)}}{f_{j}^{(i)}}-1\right], \qquad \forall \,j\ge0.\\
	\end{aligned}
\end{equation}
This is the sum of the risk free return for placing collateral with the exchange and the weighted average daily returns of the various futures contracts the investor holds.

The value of the index the investor tracks is denoted by the discrete-time stochastic process, $\left\{S_j\right\}_{j=0}^\infty$. Suppose the index satisfies the following equation (see below for motivation and discussion):
\begin{equation}\label{historical_vix_discrete}
	\begin{aligned}
		S_{j+1}-S_{j}={\mu}({\theta}-S_{j})\Delta t+g(j\Delta t,S_j)\sqrt{\Delta t}{Z}_{j+1}, \qquad \forall \,j\ge0,
	\end{aligned}
\end{equation}
where $\left\{Z_j\right\}_{j=1}^\infty$ are independent and identically distributed standard normal random variables under the historical measure $\P$. It follows that the daily return of the index is equal to
\begin{equation}\label{historical_vix_ret}
	\begin{aligned}
		\frac{S_{j+1}}{S_{j}}-1=\frac{\mu\theta\Delta t}{S_{j}}-\mu\Delta t+\frac{g(j\Delta t,S_j)}{S_{j}}\sqrt{\Delta t}{Z}_{j+1},  \qquad \forall \,j\ge0.
	\end{aligned}
\end{equation}
This  is motivated by the continuous-time mean-reverting model
	\begin{equation}\label{historical_vix_cts}
		\begin{aligned}
			dS_t = \mu(\theta-S_t)dt+g(t,S_t)dZ_t,
		\end{aligned}
	\end{equation}
	where $\mu,\theta$ are positive parameters, $g(\cdot,\cdot)$ is a generic local volatility function and $Z_t$ is a standard Brownian motion (SBM) under the historical measure $\P$. The drift $\mu(\theta-S_t)$ is positive (resp. negative) when  $S_t<\theta$ (resp. $S_t>\theta$) and thus $S_t$ tends to rise (resp. fall).
	That implies that $S_t$ mean-reverts to $\theta$. 
	
	By instantiating $g(\cdot,\cdot)$ with a particular function, one obtains many well-known mean-reverting models. For example, when $g(t,S_t)=\sigma$, the model is called the  {Ornstein-Uhlenbeck (OU) Model} (\cite{ouMod}) and when $g(t,S_t)=\sigma\sqrt{S_t}$, the model is called the  {Cox-Ingersoll-Ross (CIR) Model} (\cite{CIR85}).   For volatility  futures as an example,   \cite{Grunbichler1996985}  and  \cite{futures_zhang} model the S\&P500 volatility index (VIX) by  the CIR  process and provide a formula for the futures price.    Futures prices are computed under the risk-neutral measure, but for tracking and trading we also need to consider their dynamics under  the historical measure. As a result, the investor's optimal trading strategy depends on parameters from both measures.   
	
	We further assume that under the risk neutral measure, $\Q$, the index maintains the same mean-reverting property, but with different parameters. Namely, 
	\begin{equation}\label{riskNeutral_vix_cts}
		\begin{aligned}
			dS_t = \widetilde{\mu}(\widetilde{\theta}-S_t)dt+g(t,S_t)d\widetilde{Z}_t,
		\end{aligned}
	\end{equation}
	where $\widetilde{\mu},\widetilde{\theta}>0$, $g(\cdot,\cdot)$ is the same local volatility function as above\footnote{This is a standard property when we change measures: the drift changes, but the volatility remains the same.} and $\widetilde{Z}_t$ is a SBM under the risk neutral measure. We denote by $\lambda(t,S_t)$ the market price of risk that satisfies $dZ_t=d\widetilde{Z}_t-\lambda(t,S_t) dt$  so that 
	\begin{align}\label{lambdaCIR}
		\lambda(t,S_t) =\frac{{\mu}({\theta}-S_t)-\widetilde{\mu}(\widetilde{\theta}-S_t)}{g(t,S_t)}.
	\end{align}   
	This form of risk premium preserves the mean-reverting property  of the index under two measures.
	
	Under this model, the  price of a futures contract written on $S$ with maturity $T$ is
	\begin{align}\label{eq:chpt3_fTCIR}
		f^T_t := f(t,S;T) = \E^{\Q}\left[S_T|S_t=S\right] = (S-\widetilde{\theta})e^{-\widetilde{\mu}(T-t)} +\widetilde{\theta},
	\end{align}
	as long as the function $g(\cdot,\cdot)$ satisfies the integrability  condition
	\begin{align}\label{eq:chpt3_gcond}
		\E^\Q\left[\int_0^Te^{2\widetilde{\mu}\xi}g^2(\xi,S_\xi)d\xi\right]<\infty,\quad \forall\,\,\, T.
	\end{align} It follow from Ito's Formula that the dynamics of $f_t^T$ are given by
	\begin{align}
			df^T_t &= dS_t e^{-\widetilde{\mu}(T-t)} + \widetilde{\mu}e^{-\widetilde{\mu}(T-t)} (S_t-\widetilde{\theta}) dt\\
			&=  e^{-\widetilde{\mu}(T-t)} g(t,S_t)d\widetilde{Z}_t \label{futures_dynamics}
\\
			&= e^{-\widetilde{\mu}(T-t)} g(t,S_t)(\lambda(t,S_t)dt+dZ_t).\label{futures_dynamicsP}
		\end{align}
 	Equation  \eqref{futures_dynamics} indicates the dynamics under $\Q$, while \eqref{futures_dynamicsP} reflects the dynamics under $\P$.
 	
Following  \eqref{eq:chpt3_fTCIR}, we write the futures price with the discrete-time index, $j$, as
\begin{equation}\label{eq:futures_discrete}
	\begin{aligned}
		f_j^{(i)}=\widetilde{\theta}+(S_j-\widetilde{\theta})e^{-\widetilde{\mu}\left(T_i-j\Delta t\right)}, \quad \forall\, j\le \frac{T_i}{\Delta t}.
	\end{aligned}
\end{equation}
Here, $T_i$ is the maturity (measured in years) of the $i${th} futures contract and is a multiple of $\Delta t$. The above is defined $\forall\,j\le\frac{T_i}{\Delta t}$. To simplify notation, we have assumed that all contracts being considered have incepted at or before time $j=0$. For VIX futures, this assumption is valid if $N\le 7$ and the trading horizon is less than or equal 6 months. In the general case, one can simply update the set of available futures as well and still maintain the above equations for all $j\le \frac{T_i}{\Delta t}$.

To write down the return of each futures contract, we discretize \eqref{futures_dynamicsP}  to get 
\begin{equation}\label{historical_fut_discrete}
	\begin{aligned}
		f_{j+1}^{(i)}-f_{j}^{(i)} = e^{-\widetilde{\mu}\left(T_i-j\Delta t\right)}g(j\Delta t,S_j)\left(\lambda_{j} \Delta t+\sqrt{\Delta t}Z_{j+1}\right),\qquad\forall\,j\ge0,
	\end{aligned}
\end{equation}
where
\begin{align}\label{eq:lambda_disc}
	\lambda_j:=\lambda(j,S_j)=\frac{{\mu}({\theta}-S_j)-\widetilde{\mu}(\widetilde{\theta}-S_j)}{g(j\Delta t,S_j)},\qquad\forall\,j\ge0
\end{align}
is the discrete-time market price of risk. Let the time-to-maturity (in years) of the $i$th futures contract, on day $j$ be denoted by $D_j^{(i)}:=T_i-j\Delta t$. It follows that the daily return of the $i$th futures contract is
\begin{equation}\label{historical_fut_return}
	\begin{aligned}
		\frac{f_{j+1}^{(i)}}{f_{j}^{(i)}}-1 &= \frac{e^{-\widetilde{\mu}D_j^{(i)}}g(j\Delta t,S_j)}{\widetilde{\theta}+\left(S_{j}-\widetilde{\theta}\right)e^{-\widetilde{\mu}D_j^{(i)}}}\left(\lambda_j \Delta t+\sqrt{\Delta t}Z_{j+1}\right)\\
		&=\frac{g(j\Delta t,S_j)}{\widetilde{\theta}e^{\widetilde{\mu}D_j^{(i)}}+S_{j}-\widetilde{\theta}}\left(\lambda_j \Delta t+\sqrt{\Delta t}Z_{j+1}\right)\\
		&=B_{j}^{(i)}\left(\lambda_j \Delta t+\sqrt{\Delta t}Z_{j+1}\right),\qquad\forall\,j\ge0,\\
	\end{aligned}
\end{equation}
where we have defined
\begin{align}\label{eq:Bdef_chp3}
	B_{j}^{(i)}:=\frac{g(j\Delta t,S_j)}{\widetilde{\theta}e^{\widetilde{\mu}D_j^{(i)}}+S_{j}-\widetilde{\theta}},\qquad\forall\,j\ge0. 
\end{align}
Putting  \eqref{historical_fut_return} into  \eqref{discrete_portfolio_ret}, we now obtain the equation for the portfolio's return at each time step $j$:
\begin{equation}\label{discrete_portfolio_ret_plug}
	\begin{aligned}
		\frac{X_{j+1}}{X_j}-1=e^{r\Delta t}-1+\Delta t\sum_{i=1}^{N} w^{(i)}_jB_j^{(i)}\lambda_j+\sqrt{\Delta t}\sum_{i=1}^{N} w^{(i)}_jB_j^{(i)}Z_{j+1},\qquad\forall\,j\ge0.
	\end{aligned}
\end{equation}

\subsection{Optimal Tracking Problem}\label{Optimal_Tracking_Problem}
We now discuss a dynamic portfolio of  VIX futures designed to track the daily returns of  VIX. On each day $j$, the investor seeks to minimize  the conditional  expected squared deviation of the portfolio's return from a multiple $\beta\in\R$ of the index's return.
\begin{equation}\label{optimization_problem}
	\begin{aligned}
		\underset{w^{(i)}_j,i=1,...,N}{\min}\,\E^\P\left[\left(\frac{X_{j+1}}{X_{j}}-\beta\frac{S_{j+1}}{S_{j}}+\beta-1\right)^2\Big|\mathcal{F}_{j}\right],\\
	\end{aligned}
\end{equation}
where $\left\{\mathcal{F}_{j}\right\}_{j=0}^\infty$ is the discrete-time filtration representing information generated by prices observed as of day ${j}$. The leverage factor allows us to consider more general targets, though the value of $\beta=1$ was of particular interest in Sections \ref{sec:vix2} and \ref{sec:static_rep3}. The quantity $\beta-1$ comes from the fact that the objective criterion in Optimization Problem \eqref{optimization_problem} is a sqaured \emph{return} difference. That is, 
\begin{equation*}
	\begin{aligned}
		\frac{X_{j+1}}{X_{j}}-1-\beta\left(\frac{S_{j+1}}{S_{j}}-1\right)=\frac{X_{j+1}}{X_{j}}-\beta\frac{S_{j+1}}{S_{j}}+\beta-1.
	\end{aligned}
\end{equation*} 
Moreover, expectation is measured w.r.t. the historical measure, $\P$. We make this choice because the investor realizes cash flows in accordance with the historical (rather than risk neutral) measure. The risk neutral measure was only necessary to write down the discrete-time futures price equation.

%

The ETN, VXX is an exchange traded product that dynamically allocates wealth to two futures contracts. The rebalancing is based on the roll cycle of VIX futures and repeats itself during each cycle. In particular, at the beginning of the cycle, VXX holds 100\% of its wealth in the 1-month futures, and via daily rebalancing, reduces its holdings in 1-month futures and purchases 2-month futures. By maturity of the 1-month futures, VXX has sold off its entire position in 1-month futures and holds 100\% of its wealth in 2-month futures (which, at maturity \emph{is} the 1-month futures). The reduction in 1-month holdings by VXX is linear and deterministic. 

To reduce the complexity of Optimization Problem \eqref{optimization_problem} and facilitate comparison to VXX, we consider a particular case where $N=2$ with the added constraint that on each day $j$, we must have $w^{(i_1)}_j+w^{(i_2)}_j=1$. We do not require any other restrictions on the weights. Specifically, we do not set any constraints on the sign or size of the weights and in general we will find that leveraging the position in one futures contract is required to optimally track the VIX according to Optimization Problem \eqref{optimization_problem}. Note that $i_1$ and $i_2$ are parameters to the optimization problem and are not optimization variables themselves. They are left generic and we will consider several different pairs in the numerical implementation. 

The following proposition gives the complete solution to \eqref{optimization_problem} in our particular case:
\begin{proposition}\label{prop:prop}
	Under the discrete-time model \eqref{historical_vix_discrete} for the index $S$ with a portfolio of $N=2$ futures contracts on $S$ with maturities $T_{i_1}\neq T_{i_2}$ given by \eqref{discrete_portfolio}, the optimal strategy $(w_j^{(i_1)*}, w_j^{(i_2)*})$ associated with Optimization Problem \eqref{optimization_problem} is given by
	\begin{equation*}
		\begin{aligned}
			w_j^{(i_1)*}&=-\frac{\alpha_0\alpha_1+\nu_0\nu_1}{\alpha_1^2+\nu_1^2},\quad w_j^{(i_2)}&=1-w_j^{(i_1)*},
		\end{aligned}
	\end{equation*}
	with the optimal objective function value
	\begin{equation*}
		\begin{aligned}
			\frac{\left(\nu_1\alpha_0-\nu_0\alpha_1\right)^2}{\alpha_1^2+\nu_1^2},
		\end{aligned}
	\end{equation*}
	where
		\begin{align}
			&\alpha_0:=e^{r\Delta t}-1+\Delta tB_j^{(i_2)}\lambda_j-\beta\mu\Delta t\left(\frac{\theta}{S_{j}}-1\right), \label{a0}\\
			&\alpha_1:=\Delta t \lambda_j \left(B_j^{(i_1)}-B_j^{(i_2)}\right), \\
			& \nu_0:=\sqrt{\Delta t}\left(B_j^{(i_2)}-\frac{\beta g(j\Delta t,S_j)}{S_j}\right),\\
			&\nu_1:=\sqrt{\Delta t}\left(B_j^{(i_1)}-B_j^{(i_2)}\right).\label{nu1}
		\end{align}
\end{proposition}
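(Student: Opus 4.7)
The plan is to reduce Optimization Problem \eqref{optimization_problem} to an unconstrained one-variable convex quadratic in $w := w_j^{(i_1)}$ by eliminating $w_j^{(i_2)} = 1 - w$, then exploit the fact that the only randomness in the integrand (conditional on $\mathcal{F}_j$) is the standard normal $Z_{j+1}$, so the conditional expectation of the square reduces to a sum of a deterministic square plus a variance term.

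First I would substitute the discrete-time return dynamics \eqref{discrete_portfolio_ret_plug} for $X_{j+1}/X_j - 1$ and \eqref{historical_vix_ret} for $S_{j+1}/S_j - 1$ into the innermost bracket $\frac{X_{j+1}}{X_j} - \beta \frac{S_{j+1}}{S_j} + \beta - 1 = \bigl(\tfrac{X_{j+1}}{X_j}-1\bigr)-\beta\bigl(\tfrac{S_{j+1}}{S_j}-1\bigr)$. After applying the constraint $w_j^{(i_2)} = 1 - w$, the sums over $i$ collapse: $\sum_i w_j^{(i)} B_j^{(i)} = B_j^{(i_2)} + w\bigl(B_j^{(i_1)} - B_j^{(i_2)}\bigr)$. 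Collecting the deterministic ($\mathcal{F}_j$-measurable) terms and the $Z_{j+1}$-coefficient separately, one finds that the bracket takes the form
\begin{equation*}
\alpha_0 + w\,\alpha_1 \;+\; \bigl(\nu_0 + w\,\nu_1\bigr)\,Z_{j+1},
\end{equation*}
where $\alpha_0,\alpha_1,\nu_0,\nu_1$ are precisely \eqref{a0}--\eqref{nu1}. This bookkeeping step is the most error-prone part, so I would be careful about signs: the $-\beta \mu\Delta t(\theta/S_j - 1)$ term in $\alpha_0$ comes from the $-\beta(\mu\theta\Delta t/S_j - \mu\Delta t)$ contribution of $-\beta(S_{j+1}/S_j - 1)$, and the $-\beta g(j\Delta t,S_j)/S_j$ term in $\nu_0$ comes from the corresponding $Z_{j+1}$-coefficient.

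Next, since $Z_{j+1}$ is independent of $\mathcal{F}_j$ with $\E[Z_{j+1}|\mathcal{F}_j] = 0$ and $\E[Z_{j+1}^2|\mathcal{F}_j] = 1$, the conditional expectation in \eqref{optimization_problem} reduces to
\begin{equation*}
J(w) \;=\; \bigl(\alpha_0 + w\alpha_1\bigr)^2 \;+\; \bigl(\nu_0 + w\nu_1\bigr)^2,
\end{equation*}
which is a strictly convex quadratic in $w$ (provided $(\alpha_1,\nu_1)\neq(0,0)$, which holds whenever $T_{i_1}\neq T_{i_2}$ and $g(j\Delta t,S_j)\neq 0$, so that $B_j^{(i_1)}\neq B_j^{(i_2)}$). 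Setting $J'(w)=0$ yields $\alpha_1(\alpha_0+w\alpha_1)+\nu_1(\nu_0+w\nu_1)=0$, which solves uniquely to $w^* = -(\alpha_0\alpha_1+\nu_0\nu_1)/(\alpha_1^2+\nu_1^2)$, giving the stated $w_j^{(i_1)*}$ and hence $w_j^{(i_2)*} = 1 - w_j^{(i_1)*}$.

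Finally, I would substitute $w^*$ back into $J$. A short algebraic simplification gives
\begin{equation*}
\alpha_0 + w^*\alpha_1 = \frac{\nu_1(\nu_1\alpha_0 - \nu_0\alpha_1)}{\alpha_1^2+\nu_1^2}, \qquad \nu_0 + w^*\nu_1 = \frac{-\alpha_1(\nu_1\alpha_0 - \nu_0\alpha_1)}{\alpha_1^2+\nu_1^2},
\end{equation*}
so that $J(w^*) = (\nu_1^2+\alpha_1^2)(\nu_1\alpha_0 - \nu_0\alpha_1)^2/(\alpha_1^2+\nu_1^2)^2 = (\nu_1\alpha_0 - \nu_0\alpha_1)^2/(\alpha_1^2+\nu_1^2)$, matching the stated optimal value. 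The whole argument is essentially an orthogonal-projection calculation in $\R^2$; the only real obstacle is tracking the several constants during the substitution in the first step.
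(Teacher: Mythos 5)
Your proposal is correct and follows essentially the same route as the paper's proof: write the conditional return difference as an affine function of $Z_{j+1}$, use independence of $Z_{j+1}$ from $\mathcal{F}_j$ to reduce the objective to the convex quadratic $(\alpha_0+\alpha_1 w)^2+(\nu_0+\nu_1 w)^2$, solve the first-order condition, and substitute back. Your explicit note that $B_j^{(i_1)}\neq B_j^{(i_2)}$ (hence well-posedness) follows from $T_{i_1}\neq T_{i_2}$ matches the paper's argument as well.
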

\begin{proof}
	Combining     \eqref{historical_vix_ret} and \eqref{discrete_portfolio_ret_plug},  we compute the difference between the portfolio's return and   targeted multiple of the index's return:
	\begin{equation*}
		\begin{aligned}
			\frac{X_{j+1}}{X_{j}}-1-\beta\left(\frac{S_{j+1}}{S_{j}}-1\right)=e^{r\Delta t}-1+\Delta t\sum_{i=1}^{N} &w^{(i)}_jB_j^{(i)}\lambda_j+\sqrt{\Delta t}\sum_{i=1}^{N} w^{(i)}_jB_j^{(i)}Z_{j+1}\\
			&-\beta\mu\Delta t\left(\frac{\theta}{S_{j}}-1\right)-\frac{\beta g(j\Delta t,S_j)}{S_{j}}\sqrt{\Delta t}{Z}_{j+1}.
		\end{aligned}
	\end{equation*}
	This can be expressed in the affine form $\phi_0+\phi_1 Z_{j+1}$, where 
	\begin{equation*}
		\begin{aligned}
			\phi_0:=e^{r\Delta t}-1+\Delta t\sum_{i=1}^{N} w^{(i)}_jB_j^{(i)}\lambda_j-\beta\mu\Delta t\left(\frac{\theta}{S_{j}}-1\right),
		\end{aligned}
	\end{equation*}
	and 
	\begin{equation*}
		\begin{aligned}
			\phi_1 := \sqrt{\Delta t}\sum_{i=1}^{N} w^{(i)}_jB_j^{(i)}- \frac{\beta g(j\Delta t,S_j)}{S_j}\sqrt{\Delta t}.
		\end{aligned}
	\end{equation*}
	Both $\phi_0$ and $\phi_1$ are stochastic and depend on $j$, $S_j$, along with other model parameters. However, they are both measurable with respect to $\mathcal{F}_{j}$. Using  this with the fact that $Z_{j+1}$ is independent of $\mathcal{F}_{j}$, we conclude that, conditional on $\mathcal{F}_j$, the return difference is \emph{normally} distributed mean $\phi_0$ and variance $\phi_1^2$, and its conditional second moment is given by
	\begin{equation*}
		\begin{aligned}
			\left(e^{r\Delta t}-1+\Delta t\sum_{i=1}^{N} w^{(i)}_jB_j^{(i)}\lambda_j+\beta\mu\Delta t\left(1-\frac{\theta}{S_{j}}\right)\right)^2+\left(\sqrt{\Delta t}\sum_{i=1}^{N} w^{(i)}_jB_j^{(i)}- \frac{\beta g(j\Delta t,S_j)}{S_j}\sqrt{\Delta t}\right)^2.
		\end{aligned}
	\end{equation*}
	Now, we set $N=2$ and  $w\equiv w_j^{(i_1)}$ so that $w_j^{(i_2)}=1-w$. Then Optimization Problem \eqref{optimization_problem} is equivalent to
	\begin{equation}\label{opt_problem_simple}
		\begin{aligned}
			\underset{w\in\R}{\min}\,\left(\alpha_0+\alpha_1 w\right)^2+\left(\nu_0+\nu_1 w\right)^2,
		\end{aligned}
	\end{equation}
	where $\alpha_1$, $\alpha_0$ $\nu_1$ and $\nu_0$ are defined in \eqref{a0}--\eqref{nu1}. As with $\phi_0$ and $\phi_1$, all four of these coefficients are stochastic and depend on $j$, $S_j$ as well as the parameters of the model and futures contracts, but we suppress the dependence here.
	
	The square of a linear function is always convex and since the sum of convex function is convex, the overall optimization problem is convex. Therefore, the first-order condition, which  is necessary and sufficient for global optimality when solvable, is given by
	\begin{equation*}
 2\left(\alpha_0+\alpha_1 w\right)\alpha_1+2\left(\nu_0+\nu_1 w\right)\nu_1=0,
 	\end{equation*}which gives the optimal strategy
	\[ w^*=-\frac{\alpha_0\alpha_1+\nu_0\nu_1}{\alpha_1^2+\nu_1^2}.\]
	In order for this critical point to exist, we require $\alpha_1\neq0$ or $\nu_1\neq0$. Both are the case so long as $B_j^{(i_1)}\neq B_j^{(i_2)}$, which is equivalent to
	\begin{equation*}
		\begin{aligned}
			\frac{g(j\Delta t,S_j)}{\widetilde{\theta}e^{\widetilde{\mu}D_j^{(i_1)}}+S_{j}-\widetilde{\theta}}\neq\frac{g(j\Delta t,S_j)}{\widetilde{\theta}e^{\widetilde{\mu}D_j^{(i_1)}}+S_{j}-\widetilde{\theta}}\iff e^{\widetilde{\mu}D_j^{(i_1)}}\neq e^{\widetilde{\mu}D_j^{(i_2)}}\iff D_j^{(i_1)} \neq D_j^{(i_2)}.
		\end{aligned}
	\end{equation*}	
	This final condition is equivalent to futures contracts $i_1$ and $i_2$ having different times to maturity, which is assumed here. Finally, direct substitution of $w^*$ into the objective function, $\left(\alpha_0+\alpha_1 w\right)^2+\left(\nu_0+\nu_1 w\right)^2$, yields the optimal objective function value.
\end{proof}

As a corollary to the above proposition we can derive an index value $S_j$ such that there is zero expected squared error.
\begin{corollary}\label{cor:zeroSqrErr}
Under the discrete-time model \eqref{historical_vix_discrete} for the index $S$ with a portfolio of $N=2$ futures contracts on $S$ with maturities $T_{i_1}\neq T_{i_2}$ given by \eqref{discrete_portfolio}, the objective function of Optimization Problem \eqref{optimization_problem} has an optimal value of 0    if and only if 
	\begin{equation}
		S_j=\frac{\beta\widetilde{\mu}\widetilde{\theta}}{\beta\widetilde{\mu}+\bar{r}},\label{cor_sj}
	\end{equation}
	where $\bar{r}:= {(e^{r\Delta t}-1)}/{\Delta t}$.
\end{corollary}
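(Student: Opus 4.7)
The plan is to reduce the ``zero optimal value'' condition directly to an algebraic equation in $S_j$ using the explicit formula for the optimal objective from Proposition \ref{prop:prop}. Since that optimum equals $(\nu_1\alpha_0-\nu_0\alpha_1)^2/(\alpha_1^2+\nu_1^2)$, vanishing is equivalent to $\nu_1\alpha_0 = \nu_0\alpha_1$ (provided the denominator is nonzero, which we already established from $T_{i_1}\neq T_{i_2}$ via $D_j^{(i_1)}\neq D_j^{(i_2)}$, hence $B_j^{(i_1)}\neq B_j^{(i_2)}$). So the whole argument is: expand $\nu_1\alpha_0-\nu_0\alpha_1$ using \eqref{a0}--\eqref{nu1}, simplify, and isolate $S_j$.

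The first substantive step is to factor out the common prefactor. Both $\alpha_1$ and $\nu_1$ carry the factor $(B_j^{(i_1)}-B_j^{(i_2)})$, and both $\nu_1$ and $\nu_0$ carry a $\sqrt{\Delta t}$. After pulling these out, the cross term $\Delta t\, B_j^{(i_2)}\lambda_j$ appearing in $\alpha_0$ cancels against the $B_j^{(i_2)}$ piece of $\nu_0\alpha_1/(\sqrt{\Delta t}(B_j^{(i_1)}-B_j^{(i_2)}))$, so I expect to be left with
\begin{equation*}
\nu_1\alpha_0-\nu_0\alpha_1 = \sqrt{\Delta t}\,(B_j^{(i_1)}-B_j^{(i_2)})\Bigl[e^{r\Delta t}-1-\beta\mu\Delta t\bigl(\tfrac{\theta}{S_j}-1\bigr)+\beta\Delta t\,\tfrac{\lambda_j\,g(j\Delta t,S_j)}{S_j}\Bigr].
\end{equation*}

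The main (and really only) subtlety will be the next simplification, where I plug in the market-price-of-risk identity \eqref{eq:lambda_disc}: $\lambda_j g(j\Delta t,S_j)=\mu(\theta-S_j)-\widetilde{\mu}(\widetilde{\theta}-S_j)$. This is the step that eliminates the historical parameters $\mu,\theta$ entirely, leaving only the risk-neutral mean-reversion speed and level. After multiplying through by $S_j$ and collecting, the bracketed expression above should reduce to $e^{r\Delta t}-1+\beta\widetilde{\mu}\Delta t\,(S_j-\widetilde{\theta})/S_j$; that this collapse happens cleanly is the only thing worth double-checking in the computation.

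Finally, since $\sqrt{\Delta t}(B_j^{(i_1)}-B_j^{(i_2)})\neq 0$, the zero-error condition becomes
\begin{equation*}
(e^{r\Delta t}-1)S_j+\beta\widetilde{\mu}\Delta t\,(S_j-\widetilde{\theta})=0,
\end{equation*}
and dividing by $\Delta t$ to introduce $\bar{r}=(e^{r\Delta t}-1)/\Delta t$ and solving a linear equation for $S_j$ yields the stated formula $S_j=\beta\widetilde{\mu}\widetilde{\theta}/(\beta\widetilde{\mu}+\bar{r})$. The ``only if'' and ``if'' directions are simultaneous because every manipulation above is an equivalence.
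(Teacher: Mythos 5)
Your proposal is correct and follows essentially the same route as the paper: both reduce the zero-optimum condition to the vanishing of $\nu_1\alpha_0-\nu_0\alpha_1$, cancel the $B_j^{(i_2)}\lambda_j$ cross term, substitute the market-price-of-risk identity \eqref{eq:lambda_disc} to eliminate $\mu,\theta$, and solve the resulting linear equation $\bar{r}S_j=\beta\widetilde{\mu}(\widetilde{\theta}-S_j)$. The only (cosmetic) difference is that you work with the cross-product form $\nu_1\alpha_0=\nu_0\alpha_1$ rather than the ratio $\alpha_0/\alpha_1=\nu_0/\nu_1$ used in the paper, which sidesteps the implicit assumption $\alpha_1\neq 0$ (e.g.\ when $\lambda_j=0$).
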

\begin{proof}
	With the notation from Proposition \ref{prop:prop}, the optimal objective function value is 0 for Optimization Problem \eqref{optimization_problem} if and only if $\frac{\alpha_0}{\alpha_1}=\frac{\nu_0}{\nu_1}$. This is equivalent to 
	\begin{equation*}
		\begin{aligned}
			\frac{e^{r\Delta t}-1+\Delta tB_j^{(i_2)}\lambda_j+\beta\mu\Delta t\left(1-\frac{\theta}{S_{j}}\right)}{\Delta t \lambda_j (B_j^{(i_1)}-B_j^{(i_2)})}=\frac{\sqrt{\Delta t}\left(B_j^{(i_2)}-\frac{\beta g(j\Delta t,S_j)}{S_j}\right)}{\sqrt{\Delta t}(B_j^{(i_1)}-B_j^{(i_2)})}.
		\end{aligned}
	\end{equation*}
After rearranging terms and using $\bar{r}= {(e^{r\Delta t}-1)}/{\Delta t}$, we get 
\[\bar{r}S_j+\beta\mu(S_j-\theta)=-\beta g(j\Delta t,S_j)\lambda_j\,.\]
	Now applying equation \eqref{eq:lambda_disc} for the market price of risk $\lambda_j$, the last equation reduces to
	\begin{equation*}
		\begin{aligned}
			\bar{r}S_j+\beta\mu(S_j-\theta)=-\beta \left[{\mu}({\theta}-S_j)-\widetilde{\mu}(\widetilde{\theta}-S_j)\right]\iff \bar{r}S_j=\beta\widetilde{\mu}(\widetilde{\theta}-S_j),
		\end{aligned}
	\end{equation*}which leads us to \eqref{cor_sj} as desired.
\end{proof}

\begin{remark}\label{rmk:zeroSqrErr}
	Interestingly, the critical value is independent of both the trading day $j$ and  maturities of the contracts. \cite{wardIDX} provide the continuous-time analogue of this result with $\bar{r}$ being replaced by the continuously compounded interest rate. This is an intuitive correspondence between the trading frequency and compounding frequency of the interest rate that arises in dynamic   replication. 
\end{remark}

\section{Implementation and Discussion}\label{sec:numerics}
In this section, we implement the optimal strategy described in Section \ref{Optimal_Tracking_Problem}. In particular, we will set the function $g(\cdot,\cdot)$ as $g(t,S_t)=\sigma\sqrt{S_t}$, corresponding to the discrete-time version of the well-known CIR Model (see e.g. \cite{CIR85}). We begin in Section \ref{sec:calibration}, by discussing the calibration of this model under the historical and risk neutral measures. In Section \ref{qual} we simulate the strategy and discuss its properties, comparing and contrasting to VXX.

\subsection{Empirical Estimation}\label{sec:calibration}

The optimal strategy derived in Proposition \ref{prop:prop} requires two sets of parameters: those under the historical measure and those under the risk neutral measure. We begin with the historical measure, where we utilize the Maximum Likelihood Estimation (MLE) framework to derive the parameters $\mu,\theta,$ and $\sigma$. Suppose that the index $S$ follows the CIR model
\begin{align*}
	dS_t = \mu (\theta-S_t) dt + \sigma \sqrt{S_t} dZ_t, 
\end{align*}
where $\mu,\theta,\sigma>0$ and $Z_t$ is a SBM. Further suppose we have observed a discrete sampling of $S_t$ as $\left\{S_{t_1},...,S_{t_n}\right\}$. Then, the conditional probability density of $S_{t_j}$ at time $t_j$ given $S_{t_{j-1}}=s_{j-1}$ with (equally spaced) time increment $\Delta t := t_j-t_{j-1}$ is given by
\begin{align*}
	f^{CIR}&(s_j|s_{j-1}; \theta, \mu, \sigma)\\
	 &= \frac{1}{\tilde{\sigma}^2}\exp\left(-\frac{s_j + s_{j-1}e^{-\mu \Delta t}}{\tilde{\sigma}^2}\right) \left(\frac{s_j}{s_{j-1}e^{-\mu \Delta t}}\right)^\frac{\mathfrak{q}}{2} I_\mathfrak{q} \left(\frac{2}{\tilde{\sigma}^2}\sqrt{s_j s_{j-1} e^{-\mu \Delta t}}\right),
\end{align*}
with the constants
\begin{align*}
	\tilde{\sigma}^2 = \sigma^2 \frac{1-e^{-\mu\Delta t}}{2\mu}, \quad \mathfrak{q} = \frac{2\mu\theta}{\sigma^2}-1,
\end{align*}
and $I_\mathfrak{q}(z)$ is modified Bessel function of the first kind and of order $\mathfrak{q}$.

Using the observed values $\left\{s_j\right\}_{j=1}^n$, the CIR model parameters can be estimated by maximizing the average log-likelihood: 
\begin{align*}
	\ell(\theta,\mu,\sigma | s_1, \dots, s_n) &:= \frac{1}{n-1}\sum_{j=2}^n \ln f^{CIR}(s_j|s_{j-1}; \theta, \mu, \sigma)\\
	&=-2\ln(\tilde{\sigma}) - \frac{1}{(n-1)\tilde{\sigma}^2}\sum_{j=2}^n (s_j + s_{j-1}e^{-\mu \Delta t}) \\
	&\quad + \frac{1}{n-1}\sum_{j=2}^n \left(\frac{\mathfrak{q}}{2} \ln \left(\frac{s_j}{s_{j-1}e^{-\mu \Delta t}}\right)+\ln I_\mathfrak{q} \left(\frac{2}{\tilde{\sigma}^2}\sqrt{s_j s_{j-1} e^{-\mu \Delta t}}\right) \right).
\end{align*}
  After calibrating the model to in-sample data, we obtain   $(\mu,\theta, \sigma)=(10.86, 18.81,  6.37)$.  


Next, we turn to the risk neutral measure. Recalling that the volatility will not change under this measure, we need only calibrate the parameters of the mean reversion: $\widetilde{\mu}$ and $\widetilde{\theta}$. To do so, we employ the Method of Moments (MOM) framework, which is the industry standard. In particular, we will match the conditional expectations $\E^\Q\left[S_{T_i}|S_t\right]$, both across time and across contract maturity. In particular, the futures price on $S$, with maturity $T_i$, as observed at time $t_j$ is given by 
\begin{align*}
	f^{T_i}_t = (S_{t_j}-\widetilde{\theta})e^{-\widetilde{\mu}(T_i-t_j)} +\widetilde{\theta}.
\end{align*}
On any particular trade date, we observe the spot price as well as the entire term structure. Therefore the dataset for calibration of the risk neutral parameters, is the set of triples
\begin{align*}
	\left\{\left(S_{t_j},T_i,f_{t_j}^{T_i}\right):j=1,...,n,i=1,...,N_j\right\},
\end{align*} 
where $N_j$ is the number of futures contracts available for trading on day $j$. We utilize all available price data from all traded VIX futures contracts.  Let us denote $s_j:=S_{t_j}$ and  $f_j^{(i)}:=f_{t_j}^{T_i}$. To implement the MOM for a particular trade date, we consider the loss function
\begin{align*}
	L\left(\widetilde{\mu},\widetilde{\theta}\,\Big|s_j\right):=\frac{1}{2N_j}\sum_{i=1}^{N_j}\left((s_j-\widetilde{\theta})e^{-\widetilde{\mu}T_i} +\widetilde{\theta}-f_j^{(i)}\right)^2.
\end{align*}
To obtain a single set of parameters across all days, we average the loss function over the entire sample and minimize
\begin{align}
	L_{RN}\left(\widetilde{\mu},\widetilde{\theta}\,\Big|s_1,...,s_n\right)&:=\frac{1}{n}\sum_{j=1}^n L\left(\widetilde{\mu},\widetilde{\theta}\,\Big|s_j\right)\notag\\
	&=\frac{1}{n}\sum_{j=1}^n\frac{1}{2N_j}\sum_{i=1}^{N_j}\left((s_j-\widetilde{\theta})e^{-\widetilde{\mu}T_i} +\widetilde{\theta}-f_j^{(i)}\right)^2,\label{eq:riskNeutralOBJ}
\end{align}
by choosing $\widetilde{\mu}$ and $\widetilde{\theta}$. After calibrating to the in-sample data, the estimators for $\widetilde{\mu}$ and $\widetilde{\theta}$ are given as 1.39 and 26.03, respectively. 



\subsection{Performance and Properties of Tracking Strategy}\label{qual}
In this section, we discuss the properties of the optimal dynamic strategy derived in Section \ref{sec:dt_model_vix}. Specifically, we use the parameters computed in Section \ref{sec:calibration}, and simulate the index and its corresponding futures prices via \eqref{historical_vix_discrete} and \eqref{eq:futures_discrete}. In turn, the dynamic portfolio values  and VXX prices are generated by their corresponding strategies. For both, 1-month and 2-month futures are used in the portfolios. Recall that VXX has a linear strategy that starts from 100\% invested in the first-month futures and  linearly decreases the holding in the first-month contract and increases the holding of the second-month contract day by day until the first-month futures expires.

Three different  scenarios are simulated in Figure \ref{fig:vix3SIM}. We consider a 3-contract cycle (63 total days) period and separate the analysis into three cases. In panel (a), $S$ starts at its  historical  long-run level of $\theta$. In panel (b), $S$ starts at a low value of $\theta/3$, and in panel (c)  $S$ starts at a high value of  $3\cdot\theta$. In these scenarios, the index itself exhibits (a) non-directional, (b) upward trending, or (c) downward trending movements. In all cases, we see that the dynamic portfolio is clearly co-moving with the index though the magnitude of   movements differ. The same is true for VXX. Recall that the strategy is designed to track the index returns, rather than the physical price. In other words, the tracking strategy does not account for current or past price  deviations, and only seeks to optimize tracking of the index return for the next period. As a result, the tracking portfolio tends to move in the same direction as the index regardless of their dollar values relative to each other. Over time price deviations  may even accumulate. Nevertheless, as we can see in Figure \ref{fig:vix3SIM}, the dynamic tracking strategy tend to stay closer to the index in all three scenarios.  

To see that the dynamic portfolio tracks the returns of VIX better than VXX, we display a simulated return scatterplot in Figure \ref{fig:VIXscatterSIM} over 6 contract cycles. Panel (a) shows the scatterplot of the returns of the dynamic portfolio vs. the returns of VIX, while panel (b) gives the scatterplot of the returns of VXX vs. the returns of VIX. Both plots have the same $x$-$y$ axis scale for easy comparison. We also superimpose (in black) the line $y=x$. One notices that for the dynamic portfolio, the return pairs are closer to the line $y=x$ indicating that the portfolio returns closely match the returns of VIX. On the other hand, for VXX the returns are further from this line and the slope appears to be less than 1 in value. In fact, both statements are statistically significant. 

Our linear regressions to the two scatterplots yield the following results: In  (a), the slope is 1.0030 with an standard error of $1.6408\cdot10^{-3}$,   the intercept is $-8.5514\cdot10^{-4}$ with a standard error of  $1.2979\cdot10^{-4}$, and   $R^2= 99.97\%$. In (b), the slope is 0.8734 with an standard error of $2.9477\cdot10^{-3}$,   the intercept is $-10.1524\cdot10^{-4}$ with a standard error of  $2.3318\cdot10^{-4}$, and   $R^2=  99.86\%$. We obtain a  p-value of $0.078$ for testing the null hypothesis $H_0:\left\{\text{slope}=1\right\} $ for the regression of the dynamic portfolio returns vs. VIX returns in panel (a), which is greater than many standard significance levels. The corresponding test for VXX has a p-value of $6.521\cdot10^{-76}$.  Finally, we caution that the high $R^2$ value (over 99\%) implies that the dynamic portfolio can generate very similar returns to the VIX on a daily basis, but does not mean that it can perfectly replicate the physical dollar value of VIX over a long period of time.  


As for the strategy, we give an example in Figure \ref{fig:vixSTRATsim} over two contract cycles. Interestingly, the dynamic strategy is quite linear falling from a little over 2 in value to 1 by maturity. This indicates that the strategy is also short \%100 of its value in 2-month futures rising to 0\% by maturity of the 1-month futures. For comparison, we also plot the strategy of VXX, which is linear from 1 to 0 over each contract cycle.

\begin{figure}[h]
	\begin{centering}
		\subfigure[$S_0=\theta$]{\includegraphics[trim={0.8cm 1.6cm 0.8cm 1.6cm},clip,width=4.5in]{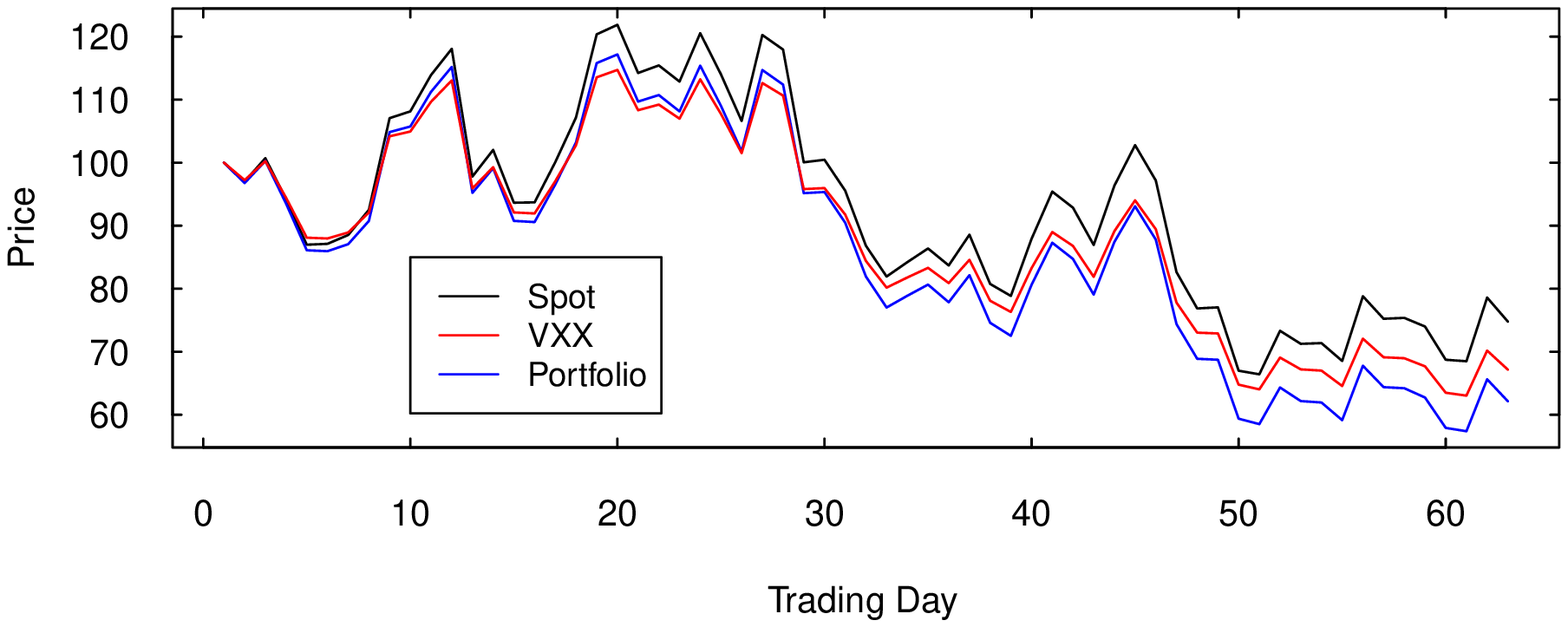}}
		\subfigure[$S_0=\frac{1}{3}\cdot\theta$]{\includegraphics[trim={0.8cm 1.6cm 0.9cm 1.6cm},clip,width=4.5in]{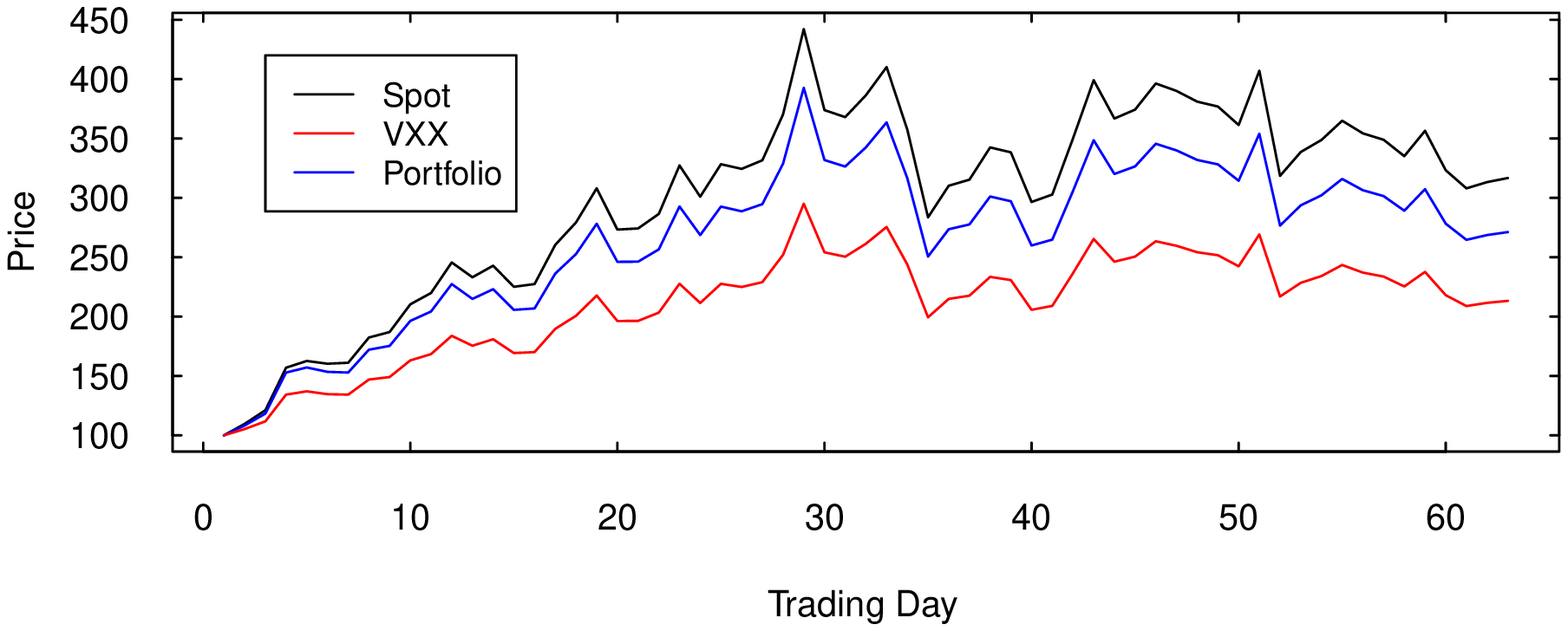}}
		\subfigure[$S_0={3}\cdot\theta$]{\includegraphics[trim={0.8cm 1.6cm 0.9cm 1.6cm},clip,width=4.5in]{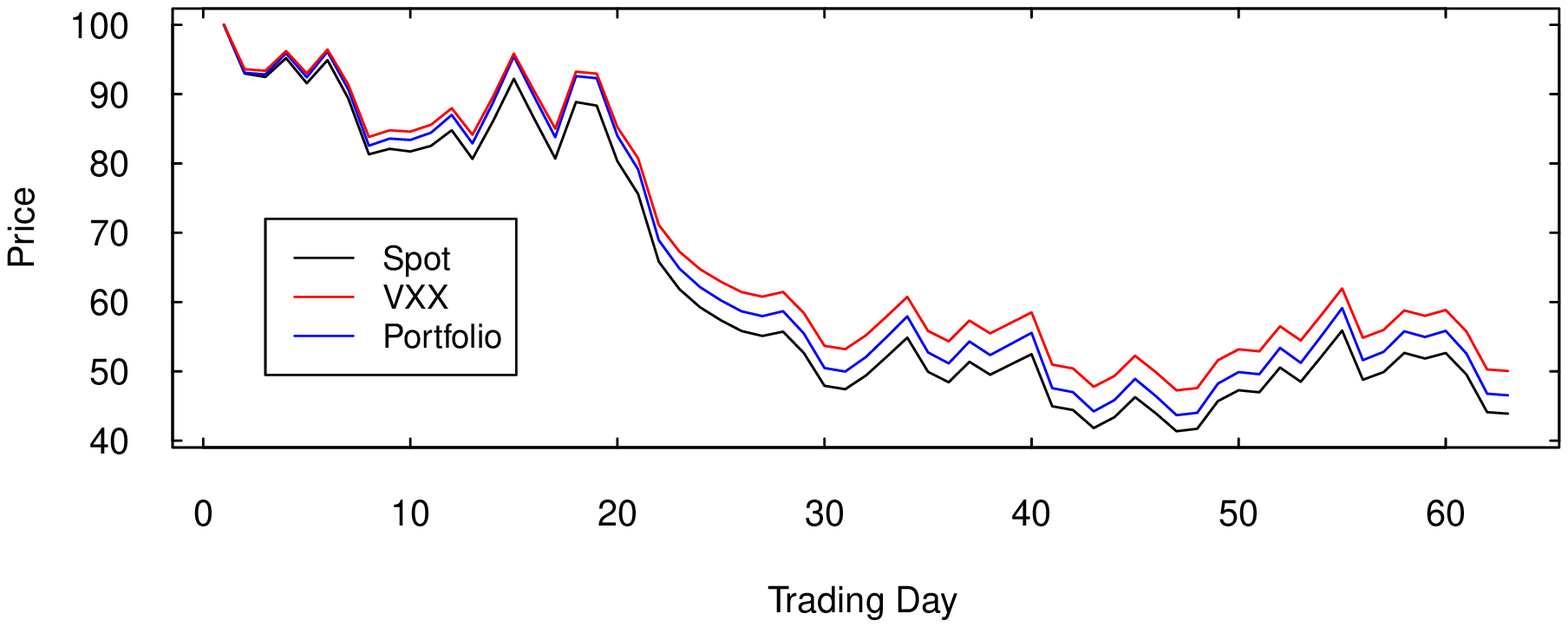}}
		\caption{\small{Sample paths of the index, VXX, and  dynamic portfolio under 3 different scenarios: (a) when VIX starts at its historical long-run mean $\theta$ (b) when VIX stats at a lower value $\theta/3$, and (c) when VIX stats at a higher value $3\theta$. All portfolios are normalized to start at \$$100$ and  their time series are plotted over 3 contract cycles. The parameters are $\mu=10.86$, $\theta=18.81$, $\sigma=6.38$, $\widetilde{\mu}=1.39$, and $\widetilde{\theta}=26.03$.  The x-axis marks the trading day, while the y-axis marks the price.  }}
		\label{fig:vix3SIM}
	\end{centering}
\end{figure}

\clearpage
\begin{figure}[t]
	\begin{centering}
		\includegraphics[trim={0.9cm 1.6cm 0.9cm 1.6cm},clip,width=3.2in]{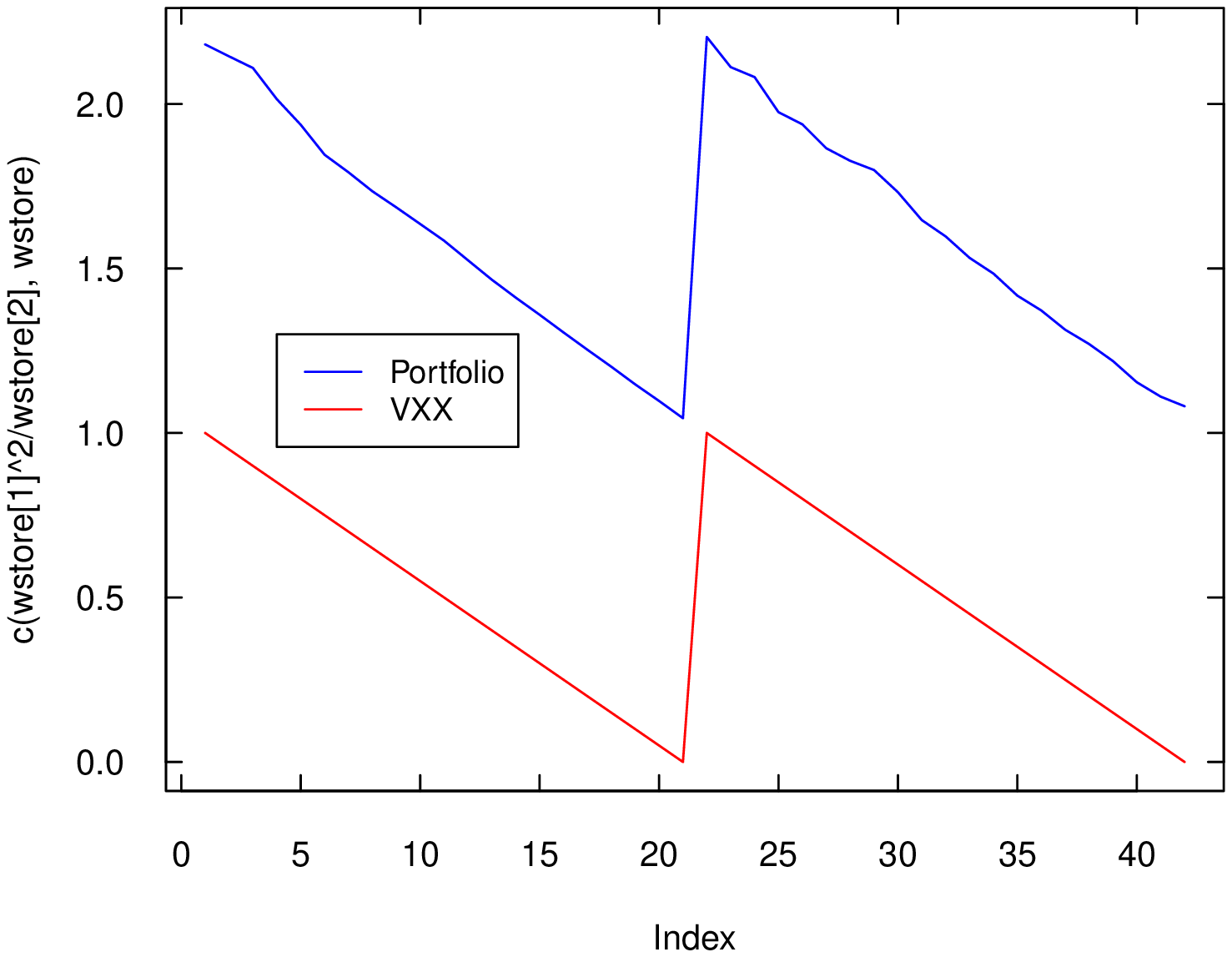}
		\caption[Dynamic Strategy for Tracking VIX]{\small{Sample paths of the portfolio weight, expressed in terms of the fraction of wealth invested in  the 1-month futures, associated with the dynamic portfolio and VXX over  42 trading days (i.e. 2 contract cycles). The rest of the portfolio is invested in the 2-month futures. The parameters are $\mu=10.86$, $\theta=18.81$, $\sigma=6.38$, $\widetilde{\mu}=1.39$, $\widetilde{\theta}=26.03$ and $S_0=\theta$. The x-axis marks the trading day, while the y-axis marks the weight on 1-month futures. }}
		\label{fig:vixSTRATsim}
	\end{centering}
\end{figure}
\begin{figure}[b]
	\centering
	\subfigure[Dynamic Portfolio]{\includegraphics[trim={0.5cm 1.4cm 0.5cm 1.4cm},clip,width=2.8in]{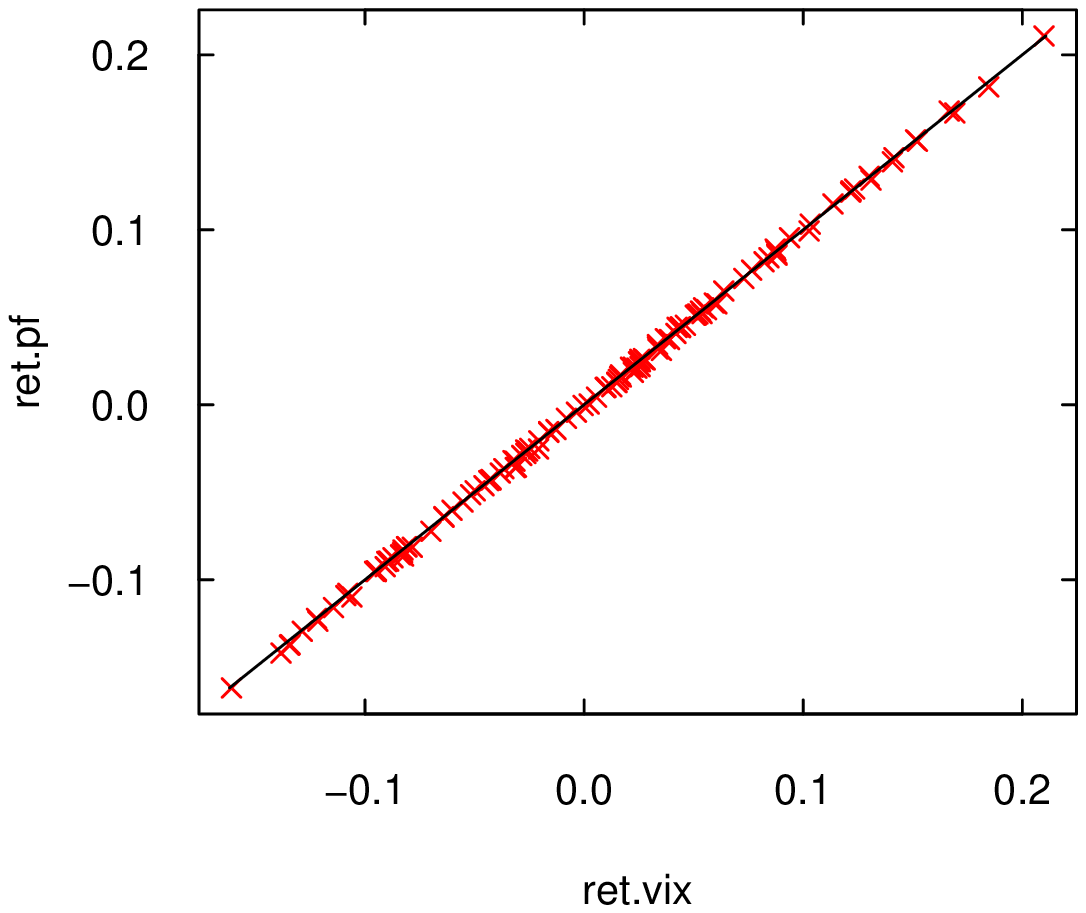}}
	\subfigure[VXX]{\includegraphics[trim={0.5cm 1.4cm 0.5cm 1.4cm},clip,width=2.8in]{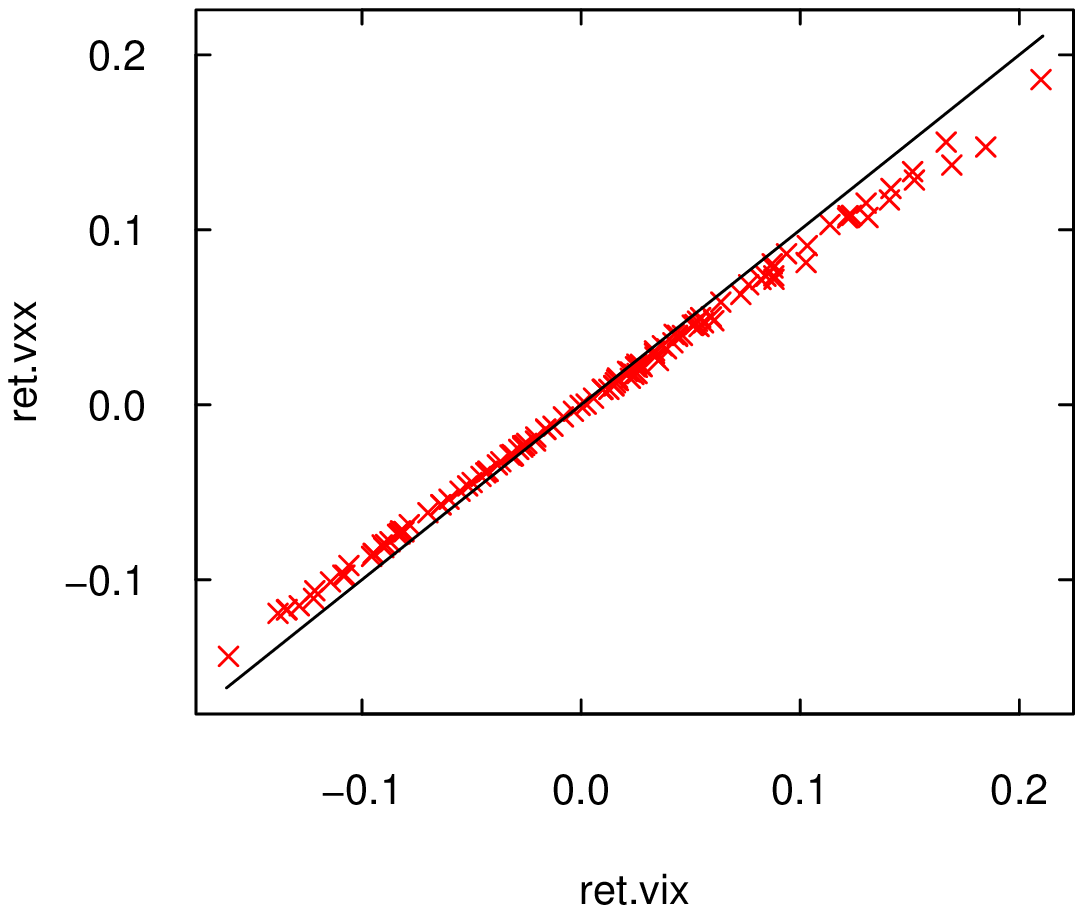}}
	\caption{\small{Scatterplots of returns for (a) dynamic portfolio vs.  VIX, and (b) VXX vs. VIX. We also plot a solid straight line with zero intercept to illustrate the   deviations of the dynamic portfolio or VXX returns from the VIX returns. The parameters are $\mu=10.86$, $\theta=18.81$, $\sigma=6.38$, $\widetilde{\mu}=1.39$, $\widetilde{\theta}=26.03$, and $S_0=\theta$. Trading is over 126 trading days (6 contract cycles). The x-axis marks the index returns (in decimals), while the y-axis marks the dynamic portfolio/VXX returns (in decimals).}}\label{fig:VIXscatterSIM}
\end{figure}

\clearpage

\section{Concluding Remarks}\label{sect-conclude}
In this paper we have presented the construction of static and dynamic portfolios of VIX futures that are optimized for VIX tracking. As discussed from both empirical and theoretical perspectives, futures prices tend not to react quickly enough to movements in the spot VIX. They also bring persistent negative returns and thus makes tracking VIX difficult.  In view of the pitfalls of static portfolios, we design and implement a dynamic trading strategy that adjusts futures positions daily to optimally replicate VIX's returns. The model is calibrated to historical data and a simulation study is performed to understand the properties exhibited by the strategy. In addition, comparing to the volatility ETN, VXX, we find that our dynamic strategy has a superior tracking performance.

There are a number of directions for future research. The question of how well ETFs or ETNs track their reference index is not limited to VIX-based products. It is relevant and important for all other asset classes, ranging from equities to commodities.\footnote{See \cite{guoleung} for a study on the tracking performance of commodity ETFs.} In our static replicating portfolios, we have chosen to include futures only for comparison to VXX.  Alternatively, one can also construct a static portfolio of options and optimize to minimize tracking errors with respect to an index or its leveraged exposure (see \cite{LeungLorig2015}). It would be interesting to compare the tracking performance of portfolios with different derivatives. Strategies and instruments aside, a major challenge in all tracking problems is to properly model the evolution of the index. Hence, accurate predictions of the index movements can potentially enhance   tracking performance. To that end, one future research direction is to apply machine learning techniques to incorporate predictive analytics into the tracking problem.

\begin{small} 
	\bibliographystyle{apa}
	\bibliography{mybib}
\end{small}

\end{document}